\newcommand{\removelatexerror}{\let\@latex@error\@gobble}
\newtheorem{theorem}{Theorem}
\newtheorem{lemma}{Lemma}
\newtheorem{reduction}{Reduction Rule}
\newtheorem{definition}{Definition}
\newtheorem{problem}{Problem}
\let\mycomment\comment
\let\comment\undefined
\let\originalwidth\textwidth
\let\comment\mycomment
\let\textwidth\originalwidth
\begin{document}
	
	\title{A Dual-mode Local Search Algorithm for Solving the Minimum Dominating Set Problem}
	
	\author{Enqiang Zhu, Yu Zhang, Shengzhi Wang, Darren Strash, and Chanjuan Liu
		\thanks{This work was supported in part by the National Natural Science Foundation of China under Grants (61872101,
			62172072), in part by the Natural Science Foundation of Guangdong Province of China under Grant 2021A1515011940, in
			part by Science and Technology Projects in Guangzhou. \textit{(Enqiang Zhu and Yu Zhang contributed equally to this work.) (Corresponding author: Darren Strash; Chanjuan Liu)}}
		
		\thanks{Enqiang Zhu is with the Institute of Computing Science and Technology, Guangzhou University, Guangzhou 510006, China (e-mail: zhuenqiang@gzhu.edu.cn). 
			
			Yu Zhang is with the Cyberspace Institute of Advanced Technology, Guangzhou University, Guangzhou 510006, China (e-mail:
			zhangyu@e.gzhu.edu.cn).
			
			Shengzhi Wang is with the Institute of Computing Science and Technology, Guangzhou University, Guangzhou 510006, China (e-mail: wallace\_sz@163.com). 
			
			Darren Strash is with the Department of Computer Science, Hamilton College, Clinton, NY 13323, USA (e-mail: dstrash@hamilton.edu).

            Chanjuan Liu is with the School of Computer Science and Technology, Dalian University of Technology, Dalian 116024, China (e-mail: chanjuanliu@dlut.edu.cn).
            }
	}
	
	
	
	\maketitle
	
	\begin{abstract}
		Given  a graph, the minimum dominating set (MinDS) problem is to identify a smallest set $D$ of vertices such that every vertex not in $D$ is adjacent to at least one vertex in $D$.  The MinDS problem is a classic  $\mathcal{NP}$-hard problem and has been extensively studied because of its many disparate applications in network analysis. To solve this problem efficiently, many heuristic approaches have been proposed to obtain a good solution within an acceptable time limit.  However, existing MinDS heuristic algorithms are always limited by various tie-breaking cases when selecting vertices,  which  slows down the  effectiveness of the algorithms.  In this paper, we design an  efficient local search algorithm for the MinDS problem, named  DmDS---a  dual-mode local search framework that probabilistically chooses between two distinct vertex-swapping schemes. We further address limitations of other algorithms by introducing vertex selection criterion based on the frequency of vertices added to solutions to address tie-breaking cases, and a new strategy to improve the quality of the initial solution via a greedy-based strategy integrated with perturbation.  We evaluate DmDS against the state-of-the-art algorithms on seven datasets, consisting of 346 instances (or families) with up to tens of millions of vertices. Experimental results show that DmDS obtains the best performance in accuracy for almost all instances and finds much better solutions than state-of-the-art MinDS algorithms on a broad range of large real-world graphs.
		
	\end{abstract}
	
	\begin{IEEEkeywords}
		Minimum dominating set,
		heuristics,
		local search,
		dual-mode,
		perturbation,
		greedy strategy,
		vertex selection
	\end{IEEEkeywords}
	
	\section{Introduction}	\label{sec1}
	\IEEEPARstart{A}{} dominating set (DS)---a set of vertices $D$ in which each vertex of the graph is in $D$ or adjacent to at least one vertex in $D$---is an important structure in graph theory. DSs have applications spanning the sciences, including social networks~\cite{wang2009positive}, epidemic control~\cite{zhao2020minimum}, and biological network analysis~\cite{NACHER201657, wuchty2014controllability}. One key example problem is to identify a smallest group of proteins in protein-protein interaction networks that enables each protein outside the group to be reached by interaction with a protein within the group. This problem is equal to finding a minimum dominating set (MinDS)~\cite{wuchty2014controllability}.
	
	Exploring  effective algorithms to find a MinDS  has been attracting considerable interest.  Because the standard brute-force search  algorithm for MinDS  takes $O^*(2^n)$ time, a lot of research has been carried out to design exact MinDS algorithms with improved time complexity.  
	The first such  algorithm was introduced by Fomin et al.~\cite{fomin2004exact}, and since then, many improved algorithms have been proposed~\cite{grandoni2006note, van2006design, schiermeyer2008efficiency, van2008design, van2009inclusion, van2011exact}. 
	The current state-of-the-art exact MinDS algorithm is based on the potential method, which takes $O(1.4864^n)$ time and polynomial space~\cite{iwata2012faster}.  On the other hand, due to its \textbf{NP}-hardness~\cite{garey1979computers},  researchers  have shown an increased interest in designing inexact algorithms for finding a close-to-minimum DS.  Greedy-based approaches provide a simple and efficient way for constructing DSs, but such algorithms often perform poorly (especially on large instances) and cannot satisfy practical requirements \cite{cai2021two}. To improve the quality of solutions obtained by greedy-based algorithms, a general approach is to first decrease the size of the input instance by applying exact data reduction rules, then use greedy-based approaches to generate an initial solution, and finally adopt heuristics to improve the solution quality~\cite{lamm2016finding, dahlum2016accelerating, akiba2016branch}.

	\subsection{Related Work} 
	Many recent studies show the significance of heuristic approaches to solve intractable problems, such as the partition coloring~\cite{9241416},  critical node~\cite{LIU2023119140}, and minimum vertex cover~\cite{quan2021local} problems.
	Here we give a brief review of the heuristic algorithms for solving the MinDS problem. 
	
	Heuristics, including ant-colony optimization~\cite{potluri2011two},  genetic algorithms~\cite{hedar2010hybrid, alharbi2017genetic}, and local search algorithms~\cite{hedar2012simulated, fan2019efficient, cai2021two} have been proposed to deal with the MinDS problem. In particular, local search approaches have attracted much attention in recent years. Vertex swapping is an important component in local search for graph problems. If vertex swapping depends heavily on greedy methods, then the algorithms will repeatedly visit candidate solutions that have been encountered recently and result in a local optimum. This phenomenon is called the cycling problem, which is an issue inherent in local search algorithms.   
	To address this problem, Cai et al. \cite{cai2011local} proposed a configuration checking (CC) strategy, adopting Boolean array markers to prevent recently visited vertices from being revisited,  where the restriction of a vertex $v$ is relaxed if  the state  of a vertex adjacent to  $v$ is changed. 
	In 2017, an algorithm named CC$^2$FS based on a variant of CC (CC$^2$) was proposed to solve the minimum weight dominating set (MinWDS) problem \cite{wang2017local}, in which CC was extended to  \emph{two-level} mode: a vertex $v$  is reset to an allowed state if a vertex within distance two from $v$ is removed from (or added to) the solution. Subsequently, Wang et al.~\cite{wang2018fast}  introduced another variant CC$^2$V3 of CC that adds a new state to CC$^2$ for the purpose of further distinguishing the priority of visitable vertices and showed that CC$^2$V3 can enhance the performance of  CC$^2$FS on large sparse real-world graphs. Based on CC$^2$V3,  an algorithm called FastMWDS was proposed to identify a MinWDS.
	In 2019,  by  combining score checking (SC) with probabilistic walk, Fan et al.~\cite{fan2019efficient} proposed a local search algorithm named ScBppw for the MinDS problem on large graphs, which outperforms FastMWDS on many large graphs.  Recently, Cai et al. \cite{cai2021two} introduced a two-goal  local search framework for the MinDS problem and developed a MinDS algorithm named FastDS based on three graph reduction rules (called inference rules in their article).  In contrast to the classic local search MinDS framework,  FastDS aims to improve a given $k$-sized feasible solution to a  ($k-1$)- or ($k-2$)-sized feasible solution each time, which accelerates the convergence speed and expands the search region.  To date, the state-of-the-art algorithms for MinDS are ScBppw and FastDS.

	\subsection{Motivation and Our Contributions}
	\subsubsection{Motivation}
	Note that the ScBppw algorithm can obtain small solutions on large graphs, but  it does not perform as well as FastDS  on  medium size graphs.  One reason is that the SC strategy consumes  a lot of time in the process of implementing local search. In addition,  the SC strategy is stronger than the  CC$^2$ strategy \cite{fan2019efficient}, which results in more vertices being prohibited from  visiting and limits the algorithm's search space.
	Regarding FastDS, we observe that there are many vertices with equal  priority when selecting vertices, even though the algorithm uses two criteria to address  the ties.  Moreover, the initial solution of FastDS is constructed by a greedy-based strategy associated with non-dominated degrees,  which has a great potential to be improved. Therefore, we would like to explore techniques to  improve  the above issues.
	
	\subsubsection{Contributions}
	This paper focuses on designing an efficient heuristic algorithm to find a high-quality (small) dominating set in a given graph. We propose a new local search algorithm for the MinDS problem named DmDS, which includes three main ideas: a novel local search framework, a new approach for constructing an initial solution, and a new criterion for selecting vertices. 
	First, to overcome the cycling phenomenon,  we propose a dual-mode local search framework in the search phase, which allows the algorithm to run  in two modes. That is,  when finding a $k$-sized dominating set, DmDS searches for a solution with size ($k$-1) by (2, 1)-swaps or (3, 2)-swaps, where a ($j$, $j-1$)-swap removes $j$ vertices from the solution and replaces them with $j-1$ vertices~\cite{andrade2012fast}. Second, a high-quality initial solution can accelerate the convergence speed of local search algorithms. The common methods for  generating an initial DS are greedy-based strategies \cite{cai2021two} and random-based strategies \cite{abed2022hybrid}, because these strategies are simple and efficient.  However,  the quality of solutions generated by these strategies is often poor and the time complexity would be increased significantly when adding complicated strategies to improve solutions' quality. 
	Based on the above considerations,  DmDS uses  two approaches to generate solutions and chooses  the better one as the initial solution,  including  a greedy-based algorithm according to dominating profits  and an approach integrating the  greedy strategy with a perturbation mechanism. 
	Third,  in the process of swapping vertices, selecting vertices according to a single criterion would suffer from many tie-breaking cases, i.e., there are multiple vertices with the best value.  Previous work used ``age'' to address this problem.  However, on large sparse real-world graphs, the improvement is not obvious and  there are still multiple best vertices that can be selected at the same time. Therefore, we propose a new vertex selection criterion based on the number of times each vertex has been added to the solution.
	
	\subsection{Organization of This Paper}
	The remainder of this paper is organized as follows. Section \ref{sec2} provides preliminary information, including important notation and terminology, as well as a brief review of the local search and related techniques. 
	Section \ref{sec4} describes the DmDS algorithm. Section \ref{sec5} is devoted to the design and analysis of experiments, and Section \ref{sec6} provides concluding remarks.

	\section{Preliminaries}	\label{sec2}
	We first introduce some necessary notations and terminologies,  followed by a brief review of local search. 
 
	\subsection{Notations and Terminologies}
	All graphs considered in this paper are unweighted, undirected, and simple.  We use  $G=(V, E)$ to denote a graph with \emph{vertex set} $V$  and  \emph{edge set}  $E$.  A vertex is \emph{adjacent} to (or a \emph{neighbor} of) another vertex in $G$ if they are connected by an edge. For a vertex $v\in V$, the \emph{degree} of $v$, denoted by $d(v)$,  is the number of edges incident with $v$;  the \emph{neighborhood} of $v$, denoted by $N(v)$, is the set of neighbors of $v$, i.e., $N(v)$=$\{u\mid u\in V$ and $uv\in E\}$;  the \emph{closed neighborhood} of $v$, denoted by $N[v]$, is  the union of   $N(v)$ and $\{v\}$, i.e., $N[v] = N(v) \cup \{v\}$; and the \emph{two-level closed neighborhood} of $v$, denoted by  $N^2[v]$, is the set of vertices that have distance at most two from $v$, i.e., $N^2[v] = N[v] \cup (\bigcup _{u\in N(v)}N(u))$. Moreover, for a subset $S \subseteq V$, let $N(S)= \bigcup _{v\in S} (N(v) \setminus S)$ and $N[S]=N(S)\cup S$.	
	
	\begin{definition}
		In a graph $G = (V, E)$, a vertex $u$ is dominated by a vertex $v$ if $v\in N[u]$. A dominating set of $G$ is a subset $D \subseteq V$ such that every vertex in $V$ is dominated by a vertex in $D$, i.e.,  $N[D]=V$.  A dominating set of minimum cardinality is called a minimum dominating set (MinDS). The MinDS problem is to find a MinDS in a graph.  
	\end{definition}
	
	Take the graph shown in Figure \ref{fig-example} as an example to illustrate the definition of MinDS. Observe that this graph contains six vertices and the maximum degree of vertices is four, which implies that a MinDS contains at least two vertices. On the other hand,  $N[\{v_1, v_4\}]=\{v_1, v_2, v_3, v_4, v_5, v_6\}$. Therefore,  $\{v_1, v_4\}$ is a MinDS.  Note that there are other MinDSs, such as $\{v_2,v_5\}$,   $\{v_2,v_6\}$,  and $\{v_2,v_4\}$. Thus, a graph may have more than one MinDS.

	\begin{figure}[H]
		\centering
		\includegraphics[width=0.5\linewidth]{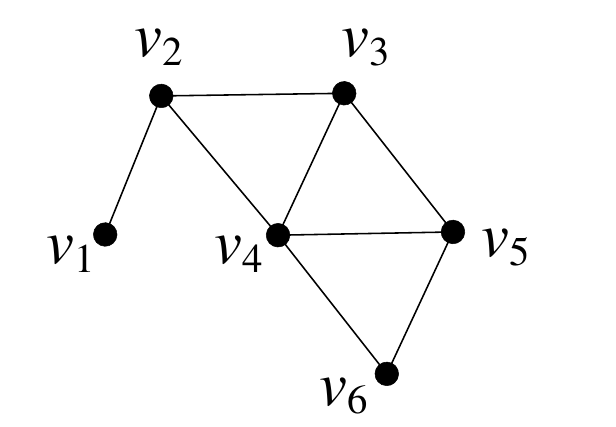}
		\caption{An example graph where $\{v_1, v_4\}$ is a MinDS.}\label{fig-example}
	\end{figure}

	\subsection{Local Search}
	The principle of a local search algorithm can be described as follows:  Start with an initial solution and then  iteratively improve it (from one solution to another)  by removing vertices, adding vertices, and swapping vertices. Local search  often incorporates some effective techniques to enhance its search results, such as tabu strategies, reduction techniques, and heuristic vertex selection strategies. For example,  CC~\cite{cai2011local}, CC$^2$~\cite{wang2017local}, CC$^2$V3~\cite{wang2018fast}, and SC~\cite{fan2019efficient} are tabu strategies that are designed to avoid wasting time in low-quality search space and  address the cycling problem by forbidding some unrelated vertices to join to the solution.  
    The aim of designing reduction rules is to  simplify instances,  with the advantage of  reducing the search space by determining a part of vertices that definitely belong (or not belong) to an optimal solution.  
	Additionally, in each iteration of the local search, the greedy-based strategies for selecting vertices are time-consuming and easily result in a local optimum. For this,  Cai et al. \cite{cai2015balance}  proposed a heuristic vertex selection strategy named Best from Multiple Selections (BMS), which selects the best vertex from  $t$ random vertices ($t$ is a predefined parameter, in general setting to 50) instead of traversing all vertices. The BMS strategy has been used to improve the efficiency of local search algorithms for solving the MinDS problem \cite{wang2018fast, cai2021two}.

	\section{Main Algorithms}	\label{sec4}
	We first present the top-level local search framework (Algorithm \ref{alg1}), followed by the construction of initial solutions (Algorithm \ref{alg2}).  Finally, we give the local search algorithm DmDS (Algorithm \ref{alg3})  with the time complexity analysis. 
	
	For presentation purposes, we specify some notations used in this section.  
	Given a graph $G$, let $D\subseteq V$ and $X\subseteq V$ be two disjoint sets, representing a partial dominating set of $G$ and a set of vertices forbidden from being added to $D$, respectively. We use a  Boolean function $dom_D$ (from $V$ to \{1,0\}) to characterize whether a vertex is dominated  by a vertex in $D$, which is defined as follows: for every vertex $v\in V$, if $v\in N[D]$, then $dom_D(v)=1$; otherwise, $dom_D(v)=0$. Denote by  $UD$ the set of vertices with $dom_D$ value 0, i.e., $UD=\{v\mid v\in V$ and $dom_D(v)=0\}$.  
	For a vertex $u \in D$, denote by $loss_{D}(u)$ the number of vertices $v$ such that $dom_D(v)=1$ and  $dom_{D\setminus \{u\}}(v)=0$; and for a vertex $w \notin D$, denote by $gain_{D}(w)$ the number of vertices $v$ such that $dom_D(v)=0$ and  $dom_{D\cup \{w\}}(v)=1$. Moreover, we use the $age(v)$ to denote the number of iterations the status of  $v$ is not changed, and $freq(v)$ to  denote the number of times  $v$ is added to the solution. The large $age$ of a vertex implies that the vertex has been in (or not in) a solution for a long time, and in this case, the vertex should have a priority to be removed from (or added to) a solution. In addition, if a vertex has a smaller $freq$ value, then it has been added to a solution fewer times, and it should have a priority to be added to a solution.

	\subsection{A Dual-mode Local Search Framework} 
	The details of the local search framework are shown in Algorithm \ref{alg1}. First, an initial solution $D$ is constructed by a procedure \emph{Initialization()} (line 1), where some reduction techniques are adopted to reduce the search space. The next is the main part of the framework  (lines 2--12), which tries iteratively reducing the size of a feasible solution  within a given period of time. 
	Specifically, each time a feasible solution $D$ is obtained, the algorithm checks  and removes all redundant vertices from $D$ (lines 3--5),  updates the  best solution $D^*$ (line 6), and removes a vertex with the minimum $loss$ value from $D$ (line 7), where a vertex $v \in D$ is redundant if $loss_D(v)=0$ and $D\setminus \{v\}$ is a feasible solution. On the other hand, when the current solution $D$ is not feasible, two vertex-swapping modes (1, 1)-swap and (2, 2)-swap are implemented with probabilities $\alpha$ (lines 9--10) and   $1-\alpha$ (lines 11--12), respectively,   to exchange vertices  until $D$ becomes feasible, where $ (k, k)$-swap is to remove $k$ vertices from $D$ and add $k$ new vertices to $D$ for $k=1,2$. Finally, the best solution $D^*$ is returned (Line 13).

	\begin{algorithm}[!t]
		\caption{A dual-mode local search framework} \label{alg1}
		\KwIn{A graph $G=(V,E)$, the \emph{cutoff} time }
		\KwOut{A dominating set $D^*$ of $G$}
		
		\nlset{1} $D \leftarrow$  \emph{Initialization}$(G)$;
		
		\nlset{2} \While{elapsed\_time $<$ cutoff}{
			\nlset{3}	\If{$D$ is feasible}
			{\nlset{4} \If{$D$ contains redundant vertices}
				{\nlset{5} remove all redundant vertices from $D$;}
				
				\nlset{6}	$D^* \leftarrow D$;
				
				\nlset{7}	remove a vertex with minimum $loss$ from $D$;
				
				\nlset{8}	continue;
			}
			
			\nlset{9}	\If{with probability $\alpha$}{\nlset{10} search by \textbf{(1, 1)-swap};}
			
			\nlset{11}	\Else{\nlset{12} search by \textbf{(2, 2)-swap};}
		}
		
		\nlset{13} \Return{$D^*$};
	\end{algorithm}

	\subsection{Initial Solution Generation}
	The graph reduction strategy is a key mechanism for designing effective heuristic algorithms for intractable graph problems on large sparse graphs \cite{lamm2016finding, zhu2022adaptive}. To generate a high-quality initial solution for DmDS,  we first  use three simple graph reduction rules, proposed by Cai et al.~\cite{cai2021two}, to reduce the  search space  (Algorithm \ref{alg2}; lines 1--4). The sets $D$ and $X$ in the following three reduction rules are the same as described above.
	
		\begin{reduction}[Degree 0]
			Let $v$ be a vertex with degree zero. If $dom_{D}(v)=0$, then add $v$ to $D$.
		\end{reduction}
		
		\begin{reduction}[Degree 1]
			Let $v$ be a  vertex of degree one and $u$ the neighbor of $v$. If $dom_{D}(u)=0$, 
			then add $u$ to $D$ and $v$ to $X$.
		\end{reduction}
		
		\begin{reduction}[Triangle]
			Let ${u,v,w}$ be a triangle such that $d(u)=d(v)=2$. If $dom_{D}(w)=0$, then add $w$ to $D$ and $u$ and $v$ to $X$.
		\end{reduction}

	As shown in Algorithm~\ref{alg2}, the sets $D^*$ and $X$ (of vertices in an optimal solution and excluded from that optimal solution, respectively) can be obtained by applying Reduction Rules 1--3. We invite the interested reader to refer to~\cite{cai2021two} for the proofs of correctness.  Next, after initializing  three sets $V, D, D'$ (line 5), two distinct strategies Greed($G,D$) and Perturbation($G,D'$) are used to construct two initial solutions $D$ and $D'$, respectively (lines 6--7), and the best one is selected as the final initial solution  (lines 8--9).  The Greed($G,D$) iteratively adds a vertex $v$ with the maximum $gain_D(v)$ to $D$ until $D$ becomes a feasible solution  ({\bf{Procedure}} Greed($G,D$); lines 1--3), and finally remove redundant vertices from $D$ ({\bf{Procedure}} Greed($G,D$); line 4). The Perturbation($G,D'$)  first selects a vertex $v$ with the maximum $gain_{D'}(v)$, saves the value of  $gain_{D'}(v)$, and updates $D'$ by adding $v$ to $D'$  ({\bf{Procedure}} Perturbation($G,D'$); lines 2--3); next, 
	if there is a vertex $u$ with minimum loss in $D'$ such that $loss_{D'}(u) < gain_{D'}(v)$, then $v$ is removed from $D'$  ({\bf{Procedure}} Perturbation($G,D'$); lines 4--6); the algorithm repeatedly conducts the above procedure until $D'$ becomes a feasible solution; finally, the algorithm removes all redundant vertices from $D'$({\bf{Procedure}} Perturbation($G,D'$); line 7).

	\begin{algorithm}[htbp] \label{alg2}
		\caption{$Initialization(G)$}
		\KwIn{A graph $G=(V,E)$ }
		\KwOut{A dominating set $D$ of $G$}
		\nlset{1} \ForEach{$u$ in $V$}{
			\nlset{2}	$D^*, X \leftarrow$ apply \textbf{Reduction Rule 1} on $u$;
			
			\nlset{3}	$D^*, X \leftarrow$ apply \textbf{Reduction Rule 2} on $u$;
			
			\nlset{4}	$D^*, X \leftarrow$ apply \textbf{Reduction Rule 3} on $u$;
		}
		
		\nlset{5} $V \leftarrow V \setminus X, D \leftarrow D^*, D' \leftarrow D^*$;
		
		\nlset{6} $D \leftarrow$ Greed$(G,D)$;
		
		\nlset{7} $D' \leftarrow$ Perturbation$(G,D')$;
		
		\nlset{8} \If{$|D'| < |D|$}{\nlset{9} $D \leftarrow D'$;}
		
		\nlset{10} \Return{{$D$};}
		
		\vspace{5pt}
		
		\textbf{Procedure} Greed($G, D$);
		
		\nlset{1} \While{there exist vertices with $dom$ is $false$}{
			\nlset{2}	$v \leftarrow $ a vertex with maximum $gain_{D}$ in $V \setminus D$;
			
			\nlset{3}	$D \leftarrow D \cup \{v\}$;
		}
		\nlset{4}   remove all redundant vertices from $D$;
		
		\nlset{5}   \Return{{$D$};}
		
		\vspace{5pt}
		
		\textbf{Procedure} Perturbation($G, D'$);
		
		\nlset{1} \While{there exist  vertices with $dom$ is $false$}{
			\nlset{2}	$v \leftarrow $ a vertex with maximum $gain_{D'}$ in $V \setminus D'$;
			
			\nlset{3}	save $gain(v)$, $D' \leftarrow D' \cup \{v\}$;
			
			\nlset{4} $u \leftarrow $ a vertex with minimum $loss_{D'}$ in $D' \setminus D^*$;
			
			\nlset{5} \If{$loss_{D'}(u) < gain_{D'}(v)$}{
				\nlset{6} $D' \leftarrow D' \setminus \{u\}$;
			}
		}
		\nlset{7}   remove all redundant vertices from $D'$;
		
		\nlset{8} \Return{{$D'$};}
	\end{algorithm}

	\subsection{The DmDS algorithm}
	The pseudocode of DmDS  is presented in Algorithm \ref{alg3}.  The algorithm starts with an initial solution $D$ obtained by Algorithm \ref{alg2} (line 1), and then improves $D$ iteratively (lines 2--17).  In each iteration,  if $D$ is a feasible solution,  the algorithm first removes the redundant vertices from $D$ (lines 3--6), and then removes a vertex  $v$ with the minimum $loss_D(v)$ from $D$  (line 7).  Next, the algorithm conducts  a tiny perturbation by randomly removing a  vertex $u_1$ from $D$  (lines 8--9), and removes another vertex $u_2$ according to the BMS heuristic with a probability $\alpha$  (lines 10--12). Note that if $u_2$ is removed, then a (2, 2)-swap is implemented; otherwise, a (1, 1)-swap is implemented.  After removing vertices from $D$, it requires to choose proper vertices from $N[UD]$  and add them to $D$ for the purpose of obtaining a  feasible solution $D$. For this, the algorithm first selects  a vertex $w_1 \in (N[UD])$ with the greatest $gain_D(w_1)$ and add it to $D$ (lines 13--14), and if there are three removed vertices in this iteration, the algorithm further adds a vertex $w_2$ (selected by the same way as $w_1$) to $D$ (lines 16--17).
	Finally, the best found solution $D^*$ is returned when the cutoff is reached. Note that there are many tie-breaking cases when selecting a vertex to be removed from $D$ or to be added to $D$.  The algorithm breaks the ties by $age$ and $freq$. Specifically, when removing a vertex from $D$, the algorithm selects vertices with large $age$ (and large $freq$ if more than one vertex has the same largest $age$); when adding a vertex to $D$, the algorithm selects vertices with large $age$ (and small $freq$ if more than one vertex has the same largest $age$).

	\begin{algorithm}[htbp] \label{alg3}
		\caption{DmDS}
		\KwIn{A graph $G=(V,E)$, the \emph{cutoff} time }
		\KwOut{A dominating set $D^*$ of $G$}
		
		\nlset{1} $D \leftarrow  Initialization(G)$;
		
		\nlset{2} \While{elapsed\_time $<$ cutoff}{
			\nlset{3}	\If{$D$ is feasible}
			{\nlset{4} \While{there exists a vertex $v$ with $loss_{D}(v)=0$}
				{\nlset{5} remove $v$ from $D$;}
				
				\nlset{6}	$D^* \leftarrow D$;
				
				\nlset{7}	remove a vertex with minimum $loss_{D}$ from $D$, breaking ties by $age$ and $freq$;
			}
			
			\nlset{8} $u_1 \leftarrow $ a random vertex in $D$;
			
			\nlset{9} $D \leftarrow D \setminus \{u_1\}$, $rm\_num \leftarrow 2$\; \tcp{(3, 2)-swap, else (2,1)-swap}
			
			\nlset{10}	\If{with probability {$\alpha$}}{	
				
				\nlset{11}	$u_2 \leftarrow $ a vertex in $D$ selected by BMS heuristic, breaking ties by $age$ and $freq$;
				
				\nlset{12}	$D \leftarrow D \setminus \{u_2\}$, $rm\_num \leftarrow 3$;
			}
			\nlset{13} $w_1 \leftarrow $ a vertex in $N[UD]$ with the greatest $gain_{D}$, breaking ties by $age$ and $freq$;
			
			\nlset{14} $D \leftarrow D \cup \{w_1\}$;
			
			\nlset{15} \If{$UD \neq \emptyset$ and $rm\_num = 3$}{
				\nlset{16}	$w_2 \leftarrow $ a vertex in $N[UD]$ with the greatest $gain_{D}$, breaking ties by $age$ and $freq$;
				
				\nlset{17}	$D \leftarrow D \cup \{w_2\}$;
			}
			
		}
		
		\nlset{18} \Return{$D^*$};
	\end{algorithm}
	
	\subsection{Complexity Analysis}
	In this section, we analyze the time complexity of DmDS. For a given graph $G=(V,E)$, let $|V|=n$ and $|E|=m$, we use an adjacency list to represent $G$, and the time complexity for constructing it is $O(n+m)$.  
	
	\begin{lemma}\label{th1}
		Reduction Rules 1--3 run in $O(n)$ time.
	\end{lemma}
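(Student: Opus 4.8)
The plan is to bound the work done by each reduction rule at a single vertex by a constant, and then sum this constant cost over the single pass that Algorithm~\ref{alg2} (lines~1--4) makes through the $n$ vertices of $V$. I assume the graph is stored as the adjacency list built in $O(n+m)$ time, so that the degree $d(v)$, the value $dom_{D}(v)$, and membership of a vertex in $D$ or $X$ are all available through $O(1)$ lookups, while the neighbor list $N(v)$ can be scanned in $O(d(v))$ time. Under these conventions the lemma reduces to showing that checking and applying each rule at one vertex costs $O(1)$.

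First I would dispatch Rules~1 and~2, which are immediate. For \textbf{Rule~1} at a vertex $v$, testing $d(v)=0$ and $dom_{D}(v)=0$ and then inserting $v$ into $D$ are each $O(1)$ operations. For \textbf{Rule~2} at $v$, testing $d(v)=1$ is $O(1)$; the unique neighbor $u$ is read from the length-one adjacency list in $O(1)$; and checking $dom_{D}(u)=0$ followed by the insertions of $u$ into $D$ and $v$ into $X$ is again $O(1)$. Hence each of these rules contributes $O(1)$ per vertex.

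The main obstacle is \textbf{Rule~3}, because detecting a triangle---equivalently, testing whether two vertices are adjacent---is not an $O(1)$ operation in a general adjacency-list representation. The key observation that rescues the bound is the rule's degree restriction $d(u)=d(v)=2$. Treating the current vertex as the candidate $u$, I first test $d(u)=2$ in $O(1)$ and abort otherwise; if $d(u)=2$, its two neighbors $a,b$ are retrieved in $O(1)$. These two neighbors are exactly the candidates for $v$ and $w$, so the triangle $\{u,v,w\}$ exists iff $a$ and $b$ are adjacent, and the rule additionally requires one of $a,b$ to have degree $2$ in order to serve as $v$. I would therefore take whichever of $a,b$ has degree $2$ and test whether the other vertex lies in its \emph{constant-size} neighbor list; if neither has degree $2$ the rule cannot fire and we stop. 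Every such adjacency test scans only a length-two neighbor list, so the entire check-and-apply step for Rule~3 is $O(1)$ per vertex.

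Summing the three $O(1)$ contributions across the $n$ iterations of the loop yields the claimed $O(n)$ bound. The one point I would verify carefully is the bookkeeping for $dom_{D}$: the tests above only \emph{read} $dom_{D}$ values ($O(1)$ each), which keeps the counting argument clean, but the vertices actually inserted by Rules~2 and~3 (namely $u$ and $w$) may have large degree, so any scheme that eagerly propagates the new domination to all their neighbors would incur extra $O(d(\cdot))$ work. I would thus read the $O(n)$ statement as the cost of the rule-application logic---the constant-time tests and set insertions performed once per vertex---and, if eager $dom_{D}$ maintenance is intended, note that it should be charged separately (and amortized against the at most $n$ total $0\!\to\!1$ transitions of the $dom_{D}$ array).
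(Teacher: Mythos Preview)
Your proposal is correct and follows essentially the same approach as the paper: both argue that each rule costs $O(1)$ per vertex, with the crucial observation for Rule~3 being that the degree-$2$ restriction makes the adjacency/common-neighbor test constant-time, and then sum over the single pass through $V$. Your treatment is in fact more careful than the paper's---particularly your explicit handling of which neighbor serves as $v$ versus $w$ and your caveat about $dom_{D}$ maintenance, which the paper's brief proof does not address.
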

	
	\begin{proof}
		The procedure traverses all vertices once to verify the conditions of the rules. The time complexity for determining whether a vertex $v$ satisfies the conditions of Reduction Rules 1 and 2 is $O(1)$. For Reduction Rule 3, computing the common neighbors of two neighboring vertices $u$ and $v$ each with degree 2 takes $O(1)$ time. Furthermore, the operation of adding a vertex to a set takes $O(1)$ time. Therefore, the overall time complexity of the procedure is $O(n)$.

	\end{proof}
	
	\begin{lemma} \label{heapTheorem}
		Algorithm \ref{alg2} runs in $O(k\log k)$, where $k = |UD|$.
	\end{lemma}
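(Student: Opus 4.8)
The plan is to decompose Algorithm~\ref{alg2} into three parts and bound each separately: the reduction phase (lines 1--4), the procedure \emph{Greed}, and the procedure \emph{Perturbation}. By Lemma~\ref{th1} the reduction phase already costs $O(n)$, so the substance of the argument is to show that the two construction procedures together run in $O(k\log k)$ time; here $k=|UD|$ denotes the number of vertices still undominated once the reductions have been applied and $D^*$ initialized. I would present the $O(k\log k)$ bound as the cost of construction, which is the nontrivial part of the statement.

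First I would bound the number of main-loop iterations in each procedure by $O(k)$. In both \emph{Greed} and \emph{Perturbation}, every iteration selects a vertex $v$ of maximum $gain_{D}(v)$ and adds it to the partial solution, and the loop runs only while undominated vertices remain, so each chosen $v$ has $gain_{D}(v)\ge 1$ and its addition strictly decreases $|UD|$. Since $|UD|=k$ when construction begins, at most $k$ additions take place; the minimum-$loss$ removals in \emph{Perturbation} occur at most once per addition, so they too number $O(k)$. Next I would maintain the candidate vertices of $N[UD]$ in a binary heap keyed on their $gain_{D}$ values (and, for \emph{Perturbation}, a second heap keyed on $loss_{D}$ for the vertices of $D\setminus D^*$). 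Each extraction of a best vertex, each insertion, and each key change is then an $O(\log k)$ heap operation, so the selections alone contribute $O(k\log k)$.

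The hard part will be bounding the bookkeeping that keeps the heap keys correct. When a vertex $v$ is added, the vertices of $N[v]\cap UD$ become dominated, and the $gain$ value of every candidate adjacent to a newly dominated vertex must be decreased; symmetrically, a removal raises some $gain$ values. I would charge each such key update to a (candidate, newly-dominated-vertex) incidence, and observe that every vertex of $UD$ is newly dominated exactly once over the whole construction, so the total number of key updates is at most $\sum_{u\in UD}|N[u]|$. In the sparse regime targeted by the paper this sum is $O(k)$ (bounded average degree also makes $|N[UD]|=O(k)$, so each heap holds $O(k)$ elements and $\log$ of its size is $O(\log k)$); each update being an $O(\log k)$ heap operation, the aggregate maintenance cost is again $O(k\log k)$.

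Finally I would account for the redundant-vertex removals at the end of each procedure: checking $loss_{D}(v)=0$ need only range over the $O(k)$ vertices touched during construction, so this step is also $O(k\log k)$. Summing the three contributions gives construction cost $O(k\log k)$, which dominates within the sparse setting and yields the claimed bound of Lemma~\ref{heapTheorem} (together with the $O(n)$ reduction term of Lemma~\ref{th1}). The only place where genuine care is needed is the amortized counting of heap-key updates and the appeal to bounded degree that turns $\sum_{u\in UD}|N[u]|$ into $O(k)$; everything else is routine priority-queue accounting.
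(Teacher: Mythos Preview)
Your approach is essentially the same as the paper's: maintain two heaps keyed on $gain$ and on $loss$, build them in $O(k)$, and account for $O(k)$ iterations of selection and heap adjustment for a total of $O(k\log k)$, with the reduction phase handled by Lemma~\ref{th1} and redundant-vertex removal bounded separately. The paper's proof is far terser than yours---it simply asserts that adjusting the heaps takes $O(k\log k)$ and does not give the amortized charging argument or invoke any sparsity hypothesis; you supply both, which is an improvement in rigor rather than a difference in method.

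One small correction: your claim that every vertex of $UD$ is ``newly dominated exactly once over the whole construction'' is valid for \emph{Greed} but not literally for \emph{Perturbation}, since the removal in lines~4--6 can undominate vertices that are later re-dominated. Under the bounded-degree assumption you already invoke this is easy to repair (each of the $O(k)$ iterations touches $O(\Delta)$ vertices, so the total number of key updates is still $O(k)$), but as written the ``exactly once'' undercounts the work in \emph{Perturbation}. The paper's proof does not address this subtlety at all.
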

	
	\begin{proof}
		We utilize two heaps for the \emph{gain} and \emph{loss}, respectively,  to accelerate the initial solution generation, which can be constructed in  $O(k)$ time.  Note that selecting vertices with the maximum $gain$ and $loss$ values takes $O(1)$ time, and adjusting the heaps takes $O(k\log k)$ time. In addition,  Reduction Rules 1--3 run in $O(n)$ time by Lemma \ref{th1}, and deleting redundant vertices can be done in  $O(k)$ time. 
		Therefore,  the time complexity of Algorithm \ref{alg2} is $O(k\log k)$.
	\end{proof}
	
	\begin{theorem} \label{dddd}
		DmDS (Algorithm \ref{alg3}) runs in  $O(\max\{n+m, k\log k\})$,  where $k = |UD|$.
	\end{theorem}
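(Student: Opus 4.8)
The plan is to analyze Algorithm~\ref{alg3} by separating the one-time initialization cost from the per-iteration cost of the local search loop, and then arguing that each contributes one of the two terms inside the maximum. First I would account for the preprocessing: building the adjacency list takes $O(n+m)$ time, and by Lemma~\ref{heapTheorem} the call to \emph{Initialization}$(G)$ on line 1 runs in $O(k\log k)$ time where $k=|UD|$. Together these give a startup cost of $O(\max\{n+m, k\log k\})$, which already matches the claimed bound, so the crux is to show that a single pass through the \textbf{while} loop (lines 2--17) does not exceed this same bound.

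Next I would walk through the body of the loop and bound each operation, maintaining the same two heaps on \emph{gain} and \emph{loss} that were used in Algorithm~\ref{alg2}. The feasibility check and the removal of redundant vertices (lines 3--5), the update of $D^*$ (line 6), and the removal of the minimum-$loss$ vertex (line 7) each reduce to heap queries or constant-size set operations; extracting or adjusting a heap entry costs $O(\log k)$, and there are at most $O(k)$ such adjustments since each affected vertex lies in $N[UD]$ whose size is governed by $k$. The perturbation steps (lines 8--12) remove one random vertex and, with probability $\alpha$, one BMS-selected vertex; BMS inspects a constant number $t$ of candidates, so this is $O(t\log k)=O(\log k)$. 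The re-domination steps (lines 13--17) select at most two maximum-$gain$ vertices from $N[UD]$ and add them, again via $O(\log k)$ heap operations. Maintaining $age$ and $freq$ and breaking ties with them adds only constant work per candidate. Summing over the at most $O(k)$ vertices whose scores change in one iteration yields a per-iteration cost of $O(k\log k)$.

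The main obstacle I anticipate is justifying the $O(k)$ rather than $O(n)$ bound on the number of heap adjustments per iteration, since a careless analysis would charge $O(n\log n)$ for rebuilding structures over the whole vertex set. The key observation to make this work is that only vertices within a bounded neighborhood of the swapped vertices can have their $dom_D$, $gain$, or $loss$ values altered, so on sparse real-world graphs the number of updates is controlled by $k=|UD|$ and the degrees involved, not by $n$. I would therefore state carefully that the swap operations touch only $N[UD]$ and its immediate neighborhoods, so the work is proportional to $k\log k$ plus the $O(n+m)$ already paid for construction and initialization. Finally I would combine the startup cost $O(\max\{n+m, k\log k\})$ with the per-iteration cost $O(k\log k)$; since each iteration is dominated by the larger of the two terms and the initialization term $O(n+m)$ is incurred once, the overall running time is $O(\max\{n+m, k\log k\})$, completing the proof.
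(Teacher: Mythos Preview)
Your overall decomposition into startup cost plus per-iteration cost is the right skeleton, and the startup analysis matches the paper. But the per-iteration analysis diverges from the paper and, as written, does not go through.

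The paper does \emph{not} carry heaps into the main loop. Its per-iteration argument is simpler and cruder: checking/removing redundant vertices is $O(n)$; finding the minimum-$loss$ vertex in $D$ is done by a linear scan of $D$, also $O(n)$; the random removal and BMS step are $O(1)$; adding a vertex requires examining the $O(k)$ candidates in $N[UD]$; at most three removals and two additions occur. Hence one iteration is $O(n)$, which is absorbed by the $O(n+m)$ term, and the $k\log k$ term comes \emph{only} from Lemma~\ref{heapTheorem} (initialization). That is how the stated bound $O(\max\{n+m,k\log k\})$ is obtained.

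Your attempt to get $O(k\log k)$ per iteration via heaps has two concrete problems. First, the heaps you would maintain (on $loss$ over $D$ and on $gain$ over $V\setminus D$) are not of size $k$; they can be of size $\Theta(n)$, so a heap operation is $O(\log n)$, not $O(\log k)$. Second, and more seriously, the claim that ``at most $O(k)$'' score updates occur per iteration is not justified: when a vertex $v$ is removed from or added to $D$, the vertices whose $gain$/$loss$ values change are those in $N^2[v]$, and the number of such vertices is governed by degrees, not by $|UD|=k$. Your appeal to ``sparse real-world graphs'' is not a worst-case argument, and $|N[UD]|$ is not bounded by $k$ in general. The paper sidesteps all of this by not trying to beat $O(n)$ per iteration; since $n\le n+m$, that is already enough for the stated bound.
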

	\begin{proof}
		First, by Lemma \ref{heapTheorem}, initial solution generation takes $O(k\log k)$ time.  Second, removing  redundant vertices takes $O(n)$. Third, removing a vertex with a minimum $loss$ from a current solution $D$ requires to traverse $D$, taking $O(n)$ time; removing a vertex randomly from $D$ and implementing the BMS heuristic are $O(1)$. 
        In the worst case, removing a vertex can result in $\Delta+1$ vertices that are not dominated,
        where $\Delta$ is the maximum  degree of $G$. 
        Note that each time a vertex is added to $D$, it has to examine $k$ vertices, implying the adding operation takes $O(k)$ time.  
        In addition, in one round of iteration, at most three vertices are removed from $D$ and at most two vertices are added to $D$. Therefore, the time complexity of each iteration is $O(n)$, and DmDS runs in $O($max$\{n+m, k\log k\})$, where $O(n+m)$ is the time complexity for constructing an adjacency list.
  
	\end{proof}

	\section{Experiments}	\label{sec5}
	We conduct experiments on an extensive collection of benchmark instances to compare the performance of different algorithms. We first introduce the seven benchmark datasets that we use, followed by a description of the experimental setup. Then, we report the experimental findings, including the size of the dominating sets obtained by  the algorithms under identical conditions, along with an assessment of their convergence performance. Finally, an empirical evaluation of the parameters in the DmDS algorithm is provided, followed by a detailed discussion of the experimental results.
	
	\subsection{Testing Benchmarks}
	DmDS is evaluated on seven benchmark datasets, which include four standard benchmarks and three benchmarks of large sparse real-world graphs. All instances are processed as undirected unweighted simple graphs.

	\emph{UDG\footnote{https://github.com/yiyuanwang1988/MDSClassicalBenchmarks}:} This benchmark is widely used in the literature on the MinDS problem~\cite{hedar2010hybrid, hedar2012simulated, potluri2011two, chalupa2018order, cai2021two, abed2022hybrid}. Graphs in this benchmark  simulate wireless sensor networks, in which  each vertex is a center of a circle. This benchmark contains 12 families, and each family contains 10 instances.
	
	\emph{T1/T2$^1$:} These two benchmarks consist of 53 families with a total of 530 instances, where each family contains 10 instances of the same size.
	
	
	\emph{BHOSLIB\footnote{\url{http://www.nlsde.buaa.edu.cn/~kexu/benchmarks/graph-benchmarks.htm}}:} This benchmark was translated from hard random SAT instances, which have been widely used to test heuristic algorithms \cite{xu2007benchmarks}.
	
	\emph{SNAP\footnote{\url{https://snap.stanford.edu/data/}}:} This benchmark is from the Stanford Large Network Dataset Collection and was used by Cai et al. \cite{cai2021two} to test MinDS algorithms. We collect all instances they used in the benchmark to evaluate DmDS.
	
	\emph{DIMACS10\footnote{\url{https://www.cc.gatech.edu/dimacs10/downloads.shtml}}:} These are synthetic and real-world instances from the 10th DIMACS implementation challenge on graph partitioning and graph clustering~\cite{DBLP:conf/dimacs/2012}.
	
	\emph{Network Repository\footnote{https://networkrepository.com}:} This benchmark is an interactive scientific network data repository \cite{nr},  which contains a large number of graphs of various categories and has been widely used for different graph  problems~\cite{cai2015balance,cai2021two,lamm2016finding}. 
	
	\subsection{Experiment Setup} \label{setup}
	All algorithms are implemented in C++ and compiled by gcc 7.1.0 with `-O3' option. All experiments are run under CentOS Linux release 7.6.1810 with Intel(R) Xeon(R) Gold 6254 CPU @ 3.10GHz and 128GB RAM. The parameters of FastDS are set to the same as those used by its original authors~\cite{cai2021two}, and the parameter $\alpha$ of DmDS is set to 0.5. Different settings of  $\alpha$  are studied in Section \ref{empirical}.  Regarding the parameter $t$ in the BMS heuristic, we follow the authors' suggestion (near 50) and set $t$ randomly in the range of 45 to 55.
	
	We compare DmDS with two state-of-the-art heuristic MinDS algorithms, ScBppw~\cite{fan2019efficient} and FastDS~\cite{cai2021two}.  ScBppw is designed to solve large graphs for MinDS, which combines a tabu strategy called score checking with best-picking with probabilistic walk and can obtain  smaller solutions than FastMWDS (proposed by Wang et al.~\cite{wang2018fast} in 2018) on many large graphs. FastDS presents a two-goal local search framework and uses inference rules to reduce search space, which is the best-performing algorithm at present. Both of the above algorithms are local search algorithms. The code for ScBppw is available online\footnote{\url{https://github.com/Fan-Yi}}, and the code for FastDS was kindly provided by its authors.
	
	Each algorithm runs 10 times for each instance with random seeds 1, 2, 3, ..., 10, and occupies a separate core. The cutoff time is set to 1,000 CPU seconds. For each instance, we report the best size (Min.) and the average size (Avg.) of the solutions found by each algorithm over the 10 runs. Specifically, for the UDG and T1/T2 benchmarks, we report the average $\overline{D_{min}}$ of the smallest solutions over all instances in a family.
	
	\subsection{Experimental Results}
	We report and discuss the experimental results obtained by  ScBppw, FastDS, and our DmDS algorithm. 
	Table \ref{tab:summary} gives a concise report on the competitive superiority of DmDS and the other two  algorithms on different benchmarks. Specifically, it shows the number of instances on which  an algorithm finds a smaller dominating set than other algorithms. As can be seen from Table \ref{tab:summary}, the  DmDS significantly outperforms the other two algorithms on all benchmarks except BHOSLIB, and performs not well only on one instance in  the BHOSLIB benchmark.
	
	\begin{table}[htbp] 
		\centering
		\caption{Summary results on all benchmarks}
		\resizebox{0.8\linewidth}{!}{
			\begin{tabular}{lrrr}
				\toprule
				Benchmark  & \multicolumn{1}{l}{DmDS} & \multicolumn{1}{l}{FastDS} & \multicolumn{1}{l}{ScBppw} \\
				\midrule
				UDG        & \textbf{1} & 0          & 0 \\
				T1/T2  & \textbf{8} & 1          & 0 \\
				BHOSLIB    & 0          & \textbf{1} & 0 \\
				SNAP       & \textbf{5} & 3          & 0 \\
				DIMACS10   & \textbf{12} & 5          & 0 \\
				Network Repository & \textbf{94} & 16         & 2 \\
				\midrule
				Total      & \textbf{120} & 26         & 2 \\
				\bottomrule
			\end{tabular}%
		}
		\label{tab:summary}%
	\end{table}%
	
	\begin{table}[h]
		\centering
		\caption{Experimental results on UDG}
		\resizebox{0.7\linewidth}{!}{
			\begin{tabular}{p{4em}rrr}
				\toprule
				Instance   & \multicolumn{1}{l}{DmDS} & \multicolumn{1}{l}{FastDS} & \multicolumn{1}{l}{ScBppw} \\
				Family     & \multicolumn{1}{l}{$\overline{D_{min}}$} & \multicolumn{1}{l}{$\overline{D_{min}}$} & \multicolumn{1}{l}{$\overline{D_{min}}$} \\
				\midrule
				V50U150    & \textbf{12.9} & \textbf{12.9} & \textbf{12.9} \\
				V50U200    & \textbf{9.4} & \textbf{9.4} & \textbf{9.4} \\
				V100U150   & \textbf{17.0} & \textbf{17.0} & 17.3  \\
				V100U200   & \textbf{10.4} & \textbf{10.4} & 10.6  \\
				V250U150   & \textbf{18.0} & \textbf{18.0} & 19.0  \\
				V250U200   & \textbf{10.8} & \textbf{10.8} & 11.5  \\
				V500U150   & \textbf{18.5} & \textbf{18.5} & 20.1  \\
				V500U200   & \textbf{11.2} & \textbf{11.2} & 12.4  \\
				V800U150   & \textbf{19.0} & \textbf{19.0} & 20.9  \\
				V800U200   & \textbf{11.7} & 11.8       & 12.6  \\
				V1000U150  & \textbf{19.1} & \textbf{19.1} & 21.3  \\
				V1000U200  & \textbf{12.0} & \textbf{12.0} & 13.0  \\
				\bottomrule
			\end{tabular}%
		}
		\label{tab:UDG}%
	\end{table}%
	
	\begin{table}[h]
		\centering
		\caption{Experimental results on T1/T2}
		\resizebox{\linewidth}{!}{
			\begin{tabular}{p{4em}rrr|p{4em}rrr}
				\toprule
				Instance   & \multicolumn{1}{l}{DmDS} & \multicolumn{1}{l}{FastDS} & \multicolumn{1}{l}{ScBppw} & \multicolumn{1}{l}{Instance} & \multicolumn{1}{l}{DmDS} & \multicolumn{1}{l}{FastDS} & \multicolumn{1}{l}{ScBppw} \\
				Family     & \multicolumn{1}{l}{$\overline{D_{min}}$} & \multicolumn{1}{l}{$\overline{D_{min}}$} & \multicolumn{1}{l}{$\overline{D_{min}}$} & \multicolumn{1}{l}{Family} & \multicolumn{1}{l}{$\overline{D_{min}}$} & \multicolumn{1}{l}{$\overline{D_{min}}$} & \multicolumn{1}{l}{$\overline{D_{min}}$} \\
				\midrule
				V50E50     & \textbf{17.0} & \textbf{17.0} & \textbf{17.0} & \multicolumn{1}{l}{V250E1000} & \textbf{36.0} & \textbf{36.0} & 36.6  \\
				V50E100    & \textbf{11.9} & \textbf{11.9} & 12.0       & \multicolumn{1}{l}{V250E2000} & \textbf{21.6} & 21.8       & 22.4  \\
				V50E250    & \textbf{6.0} & \textbf{6.0} & \textbf{6.0} & \multicolumn{1}{l}{V250E3000} & \textbf{16.1} & 16.2       & 16.6  \\
				V50E500    & \textbf{3.9} & \textbf{3.9} & \textbf{3.9} & \multicolumn{1}{l}{V250E5000} & \textbf{11.0} & \textbf{11.0} & 11.5  \\
				V50E750    & \textbf{3.0} & \textbf{3.0} & \textbf{3.0} & \multicolumn{1}{l}{V300E300} & \textbf{100.0} & \textbf{100.0} & 100.1  \\
				V100E100   & \textbf{33.6} & \textbf{33.6} & \textbf{33.6} & \multicolumn{1}{l}{V300E500} & \textbf{77.7} & \textbf{77.7} & 78.4  \\
				V100E250   & \textbf{19.9} & \textbf{19.9} & 20.1       & \multicolumn{1}{l}{V300E750} & \textbf{59.6} & \textbf{59.6} & 60.0  \\
				V100E500   & \textbf{12.2} & \textbf{12.2} & 12.5       & \multicolumn{1}{l}{V300E1000} & \textbf{48.6} & \textbf{48.6} & 49.5  \\
				V100E750   & \textbf{9.0} & 9.1        & 9.5        & \multicolumn{1}{l}{V300E2000} & \textbf{29.4} & \textbf{29.4} & 30.6  \\
				V100E1000  & \textbf{7.5} & \textbf{7.5} & 7.8        & \multicolumn{1}{l}{V300E3000} & \textbf{22.2} & \textbf{22.2} & 22.9  \\
				V100E2000  & \textbf{4.1} & \textbf{4.1} & 4.3        & \multicolumn{1}{l}{V300E5000} & \textbf{15.3} & 15.5       & 15.9  \\
				V150E150   & \textbf{50.0} & \textbf{50.0} & \textbf{50.0} & \multicolumn{1}{l}{V500E500} & \textbf{167.0} & \textbf{167.0} & \textbf{167.0} \\
				V150E250   & \textbf{39.1} & \textbf{39.1} & 39.3       & \multicolumn{1}{l}{V500E1000} & \textbf{114.7} & \textbf{114.7} & 116.3  \\
				V150E500   & \textbf{24.6} & \textbf{24.6} & 24.9       & \multicolumn{1}{l}{V500E2000} & \textbf{71.2} & \textbf{71.2} & 72.9  \\
				V150E750   & \textbf{18.3} & \textbf{18.3} & 18.8       & \multicolumn{1}{l}{V500E5000} & \textbf{37.0} & \textbf{37.0} & 38.7  \\
				V150E1000  & \textbf{15.0} & \textbf{15.0} & 15.4       & \multicolumn{1}{l}{V500E10000} & \textbf{22.6} & 22.7       & 23.3  \\
				V150E2000  & \textbf{9.0} & \textbf{9.0} & 9.5        & \multicolumn{1}{l}{V800E0} & \textbf{267.0} & \textbf{267.0} & \textbf{267.0} \\
				V150E3000  & \textbf{6.9} & \textbf{6.9} & \textbf{6.9} & \multicolumn{1}{l}{V800E1000} & \textbf{242.5} & \textbf{242.5} & 246.5  \\
				V200E250   & \textbf{61.1} & \textbf{61.1} & 61.7       & \multicolumn{1}{l}{V800E2000} & \textbf{158.3} & \textbf{158.3} & 162.1  \\
				V200E500   & \textbf{39.6} & \textbf{39.6} & 40.0       & \multicolumn{1}{l}{V800E5000} & \textbf{82.7} & \textbf{82.7} & 86.3  \\
				V200E750   & \textbf{30.0} & \textbf{30.0} & 30.4       & \multicolumn{1}{l}{V800E10000} & \textbf{50.7} & 50.8       & 53.1  \\
				V200E1000  & \textbf{24.4} & \textbf{24.4} & 25.0       & \multicolumn{1}{l}{V1000E1000} & \textbf{333.7} & \textbf{333.7} & \textbf{333.7} \\
				V200E2000  & \textbf{15.0} & \textbf{15.0} & 15.3       & \multicolumn{1}{l}{V1000E5000} & \textbf{121.1} & \textbf{121.1} & 126.0  \\
				V200E3000  & \textbf{11.0} & \textbf{11.0} & 11.3       & \multicolumn{1}{l}{V1000E10000} & 74.3       & \textbf{74.2} & 77.6  \\
				V250E250   & \textbf{83.3} & \textbf{83.3} & \textbf{83.3} & \multicolumn{1}{l}{V1000E15000} & \textbf{55.8} & 55.9       & 58.2  \\
				V250E500   & \textbf{57.8} & \textbf{57.8} & 58.1       & \multicolumn{1}{l}{V1000E20000} & \textbf{45.7} & 45.8       & 47.4  \\
				V250E750   & \textbf{44.0} & \textbf{44.0} & 44.9       &            &            &            &  \\
				\bottomrule
			\end{tabular}%
		}
		\label{tab:T1T2}%
	\end{table}%
	
	\begin{table}[htbp]
		\centering
		\caption{Experimental results on BHOSLIB}
		\resizebox{\linewidth}{!}{
			\begin{tabular}{lrrrrrrrr}
				\toprule
				\multicolumn{3}{l}{Instance}         & \multicolumn{2}{l}{DmDS} & \multicolumn{2}{l}{FastDS} & \multicolumn{2}{l}{ScBppw} \\
				\midrule
				Name       & \multicolumn{1}{l}{$|V|$} & \multicolumn{1}{l}{$|E|$} & \multicolumn{1}{l}{Avg.} & \multicolumn{1}{l}{Min.} & \multicolumn{1}{l}{Avg.} & 	\multicolumn{1}{l}{Min.} & \multicolumn{1}{l}{Avg.} & \multicolumn{1}{l}{Min.} \\
				\midrule
				frb40-19-1 & 760        & 41,314     & 14.0       & \textbf{14} & 14.1       & \textbf{14} & 15.8       & 15 \\
				frb40-19-2 & 760        & 41,263     & 14.3       & \textbf{14} & 14.4       & \textbf{14} & 16.5       & 16 \\
				frb40-19-3 & 760        & 41,095     & 15.0       & \textbf{15} & 15.0       & \textbf{15} & 16.5       & 16 \\
				frb40-19-4 & 760        & 41,605     & 14.3       & \textbf{14} & 14.5       & \textbf{14} & 16.2       & 15 \\
				frb40-19-5 & 760        & 41,619     & 14.2       & \textbf{14} & 14.2       & \textbf{14} & 15.2       & 15 \\
				frb45-21-1 & 945        & 59,186     & 16.2       & \textbf{16} & 16.3       & \textbf{16} & 18.0       & 17 \\
				frb45-21-2 & 945        & 58,624     & 16.2       & \textbf{16} & 16.5       & \textbf{16} & 18.1       & 18 \\
				frb45-21-3 & 945        & 58,245     & 16.1       & \textbf{16} & 16.0       & \textbf{16} & 17.8       & 17 \\
				frb45-21-4 & 945        & 58,549     & 16.2       & \textbf{16} & 16.2       & \textbf{16} & 18.0       & 17 \\
				frb45-21-5 & 945        & 58,579     & 16.0       & \textbf{16} & 16.2       & \textbf{16} & 18.2       & 18 \\
				frb50-23-1 & 1,150      & 80,072     & 18.1       & \textbf{18} & 18.1       & \textbf{18} & 19.9       & 19 \\
				frb50-23-2 & 1,150      & 80,851     & 18.0       & \textbf{18} & 18.2       & \textbf{18} & 19.8       & 19 \\
				frb50-23-3 & 1,150      & 81,068     & 18.1       & \textbf{18} & 18.1       & \textbf{18} & 20.0       & 20 \\
				frb50-23-4 & 1,150      & 80,258     & 18.2       & \textbf{18} & 18.0       & \textbf{18} & 20.0       & 19 \\
				frb50-23-5 & 1,150      & 80,035     & 18.1       & \textbf{18} & 18.2       & \textbf{18} & 19.9       & 19 \\
				frb53-24-1 & 1,272      & 94,227     & 19.0       & \textbf{19} & 19.0       & \textbf{19} & 21.1       & 21 \\
				frb53-24-2 & 1,272      & 94,289     & 19.4       & \textbf{19} & 19.2       & \textbf{19} & 21.2       & 21 \\
				frb53-24-3 & 1,272      & 94,127     & 19.0       & \textbf{19} & 19.0       & \textbf{19} & 20.4       & 20 \\
				frb53-24-4 & 1,272      & 94,308     & 18.6       & \textbf{18} & 18.4       & \textbf{18} & 20.2       & 19 \\
				frb53-24-5 & 1,272      & 94,226     & 19.0       & \textbf{19} & 19.2       & \textbf{19} & 21.5       & 20 \\
				frb56-25-1 & 1,400      & 109,676    & 20.0       & \textbf{20} & 20.1       & \textbf{20} & 22.2       & 21 \\
				frb56-25-2 & 1,400      & 109,401    & 20.1       & \textbf{20} & 20.2       & \textbf{20} & 22.5       & 21 \\
				frb56-25-3 & 1,400      & 109,379    & 20.1       & \textbf{20} & 20.1       & \textbf{20} & 22.2       & 21 \\
				frb56-25-4 & 1,400      & 110,038    & 20.1       & \textbf{20} & 20.2       & \textbf{20} & 22.5       & 22 \\
				frb56-25-5 & 1,400      & 109,601    & 19.9       & \textbf{19} & 19.9       & \textbf{19} & 21.8       & 21 \\
				frb59-26-1 & 1,534      & 126,555    & 20.8       & \textbf{20} & 20.7       & \textbf{20} & 22.6       & 22 \\
				frb59-26-2 & 1,534      & 126,163    & 21.0       & 21         & 20.8       & \textbf{20} & 22.6       & 22 \\
				frb59-26-3 & 1,534      & 126,082    & 21.2       & \textbf{21} & 21.0       & \textbf{21} & 23.1       & 23 \\
				frb59-26-4 & 1,534      & 127,011    & 21.3       & \textbf{21} & 21.1       & \textbf{21} & 23.6       & 23 \\
				frb59-26-5 & 1,534      & 125,982    & 21.3       & \textbf{21} & 21.2       & \textbf{21} & 24.0       & 23 \\
				frb100-40  & 4,000      & 572,774    & 36.4       & \textbf{36} & 36.3       & \textbf{36} & 40.1       & 39 \\
				\bottomrule
			\end{tabular}%
		}
		\label{tab:BHOSLIB}%
	\end{table}%
	
	\begin{table}[htbp]
		\centering
		\caption{Averaged convergence time on three standard benchmarks}
		\resizebox{0.8\linewidth}{!}{
			\tiny{
				\begin{tabular}{lrrr}
					\toprule
					Benchmark  & \multicolumn{1}{l}{DmDS} & \multicolumn{1}{l}{FastDS} & \multicolumn{1}{l}{ScBppw} \\
					\midrule
					UDG        & 0.199s     & 4.061s     & 0.002s \\
					T1/T2  & 4.508s     & 4.591s     & 0.001s \\
					BHOSLIB    & 38.236s    & 51.112s    & 0.003s \\
					\bottomrule
				\end{tabular}%
			}
		}
		\label{tab:avg_time_standard}%
	\end{table}%

	\subsubsection{Comparison on  Solution Quality}
	
	The results  on UDG, T1/T2, BHOSLIB, SNAP, and DIMACS10 benchmarks are reported in Tables \ref{tab:UDG}, \ref{tab:T1T2}, \ref{tab:BHOSLIB}, and \ref{tab:snap_and_dimacs10}, respectively. Since there are too many instances in the Network Repository benchmark, the results are separately reported in  Table~\ref{tab:Repository-1}, Table~\ref{tab:Repository-2},  Table~\ref{tab:Repository-3}, and Table~\ref{tab:Repository-4}. We mark the best solution found by all algorithms as bold. Regarding UDG and T1/T2 benchmarks in Tables~\ref{tab:UDG} and~\ref{tab:T1T2}, we follow Cai et al. ~\cite{cai2021two} and report the average best solution for each family. It can be seen that DmDS has the best performance in terms of the ``Min.'' values, FastDS has performance similar to DmDS in the ``Avg.'' values,  
	and ScBppw has the worst performance in both of these two values.  We now provide a more detailed discussion of the results. 
	

	
	Regarding  standard benchmarks, the DmDS, FastDS, and ScBppw algorithms obtain the best solutions respectively on  12 (all), 11, and two families in  the  UDG benchmark,  respectively on  52 (out of 53), 44, and 11 families in the T1/T2 benchmark, and respectively on 30, 31 (all), and zero instances in the  BHOSLIB benchmark.  In addition, DmDS obtains the best ``Avg.'' value on  21 (out of 31) BHOSLIB instances, outperforming other algorithms. These results indicate that DmDS outperforms the other two algorithms on the standard  benchmarks, though  FastDS can obtain a slightly smaller dominating set than DmDS on one BHOSLIB instance.
	
	Regarding large real-world sparse graphs, the DmDS, FastDS, and ScBppw algorithms obtain the best solutions respectively on 19 (out of 22), 17, and 11 instances in  the SNAP benchmark,  respectively on  22 (out of 27), 15, and two instances in the DIMACS10 benchmark, and  respectively on 183 (out of 201),  105, and 48 instances in the Network Repository benchmark. 
    In particular,  DmDS  obtains a smaller dominating set than other two algorithms on five SNAP benchmark instances, 12 DIMACS10 benchmark instances, and 94 Network Repository benchmark instances;  FastDS  obtains a smaller dominating set than other two algorithms on three SNAP benchmark instances,   five DIMACS10 benchmark instances, and 16 Network Repository benchmark instances; and ScBppw  obtains a smaller dominating set than other two algorithms only on two Network Repository benchmark instances. 
    Note that there are 78 brain network instances in Network Repository benchmark with vertex counts ranging from 300,000 to 1,000,000. Excluding these instances, DmDS, FastDS, and ScBppw obtain a smaller dominating set than the other two algorithms on 25, 12, and 2 Network Repository instances, respectively.
    Regarding the  ``Avg.'' value, DmDS obtains the best solutions on  17 (out of 22) SNAP instances, 19 (out of 27) DIMACS10 instances, and 178 (out of 201) Network Repository instances, which outperforms  ScBppw and FastDS  except for SNAP benchmark (FastDS can obtain the best solutions on  18 SNAP instances). These results suggest that the DmDS algorithm significantly outperforms the other two state-of-the-art algorithms on large sparse real-world benchmarks.

	\begin{table}[thbp]
		\centering
		\caption{Experimental results on SNAP and DIMACS10}
		\resizebox{\linewidth}{!}{
			\begin{tabular}{p{5.8em}rrrrrrrr}
				\toprule
				Instance   & \multicolumn{2}{l}{DmDS} & \multicolumn{2}{l}{FastDS} & \multicolumn{2}{l}{ScBppw} \\
				\cmidrule{2-7}    Name       & \multicolumn{1}{l}{Avg.} & \multicolumn{1}{l}{Min.} & \multicolumn{1}{l}{Avg.} & \multicolumn{1}{l}{Min.} & 	\multicolumn{1}{l}{Avg.} & \multicolumn{1}{l}{Min.} \\
				\midrule
				Wiki-Vote  & 1,116.0    & \textbf{1,116} & 1,116.0    & \textbf{1,116} & 1,116.0    & \textbf{1,116} \\
				p2p-*lla04 & 2,227.0    & \textbf{2,227} & 2,227.0    & \textbf{2,227} & 2,227.0    & \textbf{2,227} \\
				p2p-*lla25 & 4,519.0    & \textbf{4,519} & 4,519.0    & \textbf{4,519} & 4,519.0    & \textbf{4,519} \\
				cit-HepTh  & 1,011.0    & \textbf{1,011} & 1,011.0    & \textbf{1,011} & 1,019.6    & 1,018 \\
				p2p-*lla24 & 5,418.0    & \textbf{5,418} & 5,418.0    & \textbf{5,418} & 5,418.0    & \textbf{5,418} \\
				cit-HepPh  & 894.0      & \textbf{894} & 894.0      & \textbf{894} & 907.0      & 905 \\
				p2p-*lla30 & 7,169.0    & \textbf{7,169} & 7,169.0    & \textbf{7,169} & 7,169.3    & \textbf{7,169} \\
				p2p-*lla31 & 12,582.0   & \textbf{12,582} & 12,582.0   & \textbf{12,582} & 12,582.0   & \textbf{12,582} \\
				soc-Epinions1 & 15,734.0   & \textbf{15,734} & 15,734.0   & \textbf{15,734} & 15,734.0   & \textbf{15,734} \\
				Slashdot0811 & 14,312.0   & \textbf{14,312} & 14,312.0   & \textbf{14,312} & 14,312.0   & \textbf{14,312} \\
				Slashdot0902 & 15,305.0   & \textbf{15,305} & 15,305.0   & \textbf{15,305} & 15,305.0   & \textbf{15,305} \\
				amazon0302 & 35,599.7   & 35,593     & 35,597.2   & \textbf{35,587} & 36,210.7   & 36,178 \\
				email-EuAll & 17,976.0   & \textbf{17,976} & 17,976.0   & \textbf{17,976} & 17,976.0   & \textbf{17,976} \\
				web-Stanford & 13,199.2   & \textbf{13,197} & 13,198.3   & \textbf{13,197} & 13,320.9   & 13,312 \\
				web-*Dame & 23,733.8   & \textbf{23,733} & 23,735.0   & 23,735     & 23,747.1   & 23,745 \\
				amazon0312 & 45,488.3   & \textbf{45,480} & 45,490.2   & 45,487     & 45,936.4   & 45,908 \\
				amazon0601 & 42,289.9   & \textbf{42,283} & 42,287.8   & 42,284     & 42,730.3   & 42,708 \\
				amazon0505 & 47,314.9   & \textbf{47,309} & 47,315.4   & 47,310     & 47,743.9   & 47,718 \\
				web-BerkStan & 28,438.8   & 28,438     & 28,435.1   & \textbf{28,434} & 28,623.6   & 28,603 \\
				web-Google & 79,699.0   & 79,699     & 79,698.0   & \textbf{79,698} & 79,758.7   & 79,751 \\
				wiki-Talk  & 36,960.0   & \textbf{36,960} & 36,960.0   & \textbf{36,960} & 36,960.0   & \textbf{36,960} \\
				cit-Patents & 621,661.4  & \textbf{621,644} & 621,736.2  & 621,705    & 623,185.8  & 623,121 \\
				\midrule
				as-22july06 & 2,026.0    & \textbf{2,026} & 2,026.0    & \textbf{2,026} & 2,026.0    & \textbf{2,026} \\
				cond-*2005 & 5,664.0    & \textbf{5,664} & 5,664.0    & \textbf{5,664} & 5,690.9    & 5,686 \\
				kron\_*logn16 & 3,885.0    & \textbf{3,885} & 3,885.0    & \textbf{3,885} & 3,885.0    & \textbf{3,885} \\
				luxem*g\_osm & 37,753.2   & \textbf{37,751} & 37,751.5   & \textbf{37,751} & 38,444.8   & 38,421 \\
				rgg*\_17\_s0 & 12,319.3   & \textbf{12,317} & 12,332.7   & 12,329     & 13,848.9   & 13,827 \\
				wave       & 11,681.6   & 11,654     & 11,646.3   & \textbf{11,609} & 13,927.0   & 13,887 \\
				caida*Level & 40,523.0   & \textbf{40,523} & 40,523.0   & \textbf{40,523} & 40,580.4   & 40,573 \\
				coAut*seer & 33,197.0   & \textbf{33,197} & 33,197.0   & \textbf{33,197} & 33,358.1   & 33,343 \\
				citat*eseer & 43,412.0   & \textbf{43,412} & 43,412.0   & \textbf{43,412} & 43,434.3   & 43,430 \\
				coAut*DBLP & 43,978.0   & \textbf{43,978} & 43,978.0   & \textbf{43,978} & 44,094.9   & 44,084 \\
				cnr-2000   & 22,010.4   & \textbf{22,010} & 22,011.2   & \textbf{22,010} & 22,030.9   & 22,027 \\
				coPape*seer & 26,082.0   & \textbf{26,082} & 26,082.0   & \textbf{26,082} & 27,008.0   & 26,989 \\
				rgg*\_19\_s0 & 44,397.1   & \textbf{44,378} & 44,413.6   & 44,404     & 50,420.2   & 50,368 \\
				coPa*DBLP & 35,596.7   & \textbf{35,595} & 35,598.7   & 35,597     & 37,090.3   & 37,060 \\
				eu-2005    & 32,287.8   & \textbf{32,281} & 32,288.4   & 32,284     & 32,325.2   & 32,316 \\
				audikw\_1  & 9,486.2    & 9,469      & 9,467.3    & \textbf{9,459} & 10,568.5   & 10,534 \\
				ldoor      & 19,725.2   & \textbf{19,710} & 19,724.4   & 19,717     & 21,546.4   & 21,527 \\
				ecology1   & 202,057.3  & \textbf{201,491} & 201,664.9  & 201,505    & 238,250.4  & 238,052 \\
				rgg*\_20\_s0 & 84,690.6   & \textbf{84,665} & 84,708.7   & 84,693     & 96,675.2   & 96,556 \\
				in-2004    & 77,781.9   & \textbf{77,781} & 77,787.0   & 77,786     & 77,861.6   & 77,850 \\
				belgium\_osm & 468,925.8  & 468,909    & 468,854.8  & \textbf{468,846} & 476,811.0  & 476,674 \\
				rgg*\_21\_s0 & 162,242.7  & \textbf{162,170} & 162,303.2  & 162,273    & 185,435.2  & 185,346 \\
				rgg*\_22\_s0 & 312,553.2  & \textbf{312,471} & 312,886.3  & 312,621    & 356,871.8  & 356,750 \\
				cage15     & 457,967.5  & 457,686    & 456,650.6  & \textbf{456,524} & 476,883.6  & 476,574 \\
				rgg*\_23\_s0 & 608,116.9  & \textbf{607,682} & 609,133.6  & 608,149    & 687,953.9  & 687,850 \\
				rgg*\_24\_s0 & 1,217,145.7  & 1,209,519  & 1,207,476.8  & \textbf{1,205,162} & 1,328,143.5  & 1,327,935 \\
				uk-2002    & 1,038,649.7  & \textbf{1,038,641} & 1,038,674.3  & 1,038,662  & 1,040,090.5  & 1,040,064 \\
				\bottomrule
			\end{tabular}%
		}
		\label{tab:snap_and_dimacs10}%
	\end{table}%
	
	\begin{table}[htbp]
		\renewcommand{\thetable}{VII-I}
		\centering
		\caption{Experimental results on part I of Network Repository}
		\resizebox{\linewidth}{!}{
			\begin{tabular}{lrrrrrrrr}
				\toprule
				Instance   & \multicolumn{2}{l}{DmDS} & \multicolumn{2}{l}{FastDS} & \multicolumn{2}{l}{ScBppw} \\
				\cmidrule{2-7}    Name       & \multicolumn{1}{l}{Avg.} & \multicolumn{1}{l}{Min.} & \multicolumn{1}{l}{Avg.} & \multicolumn{1}{l}{Min.} & 	\multicolumn{1}{l}{Avg.} & \multicolumn{1}{l}{Min.} \\
				\midrule
				bio-yeast  & 353.0      & \textbf{353} & 353.0      & \textbf{353} & 353.0      & \textbf{353} \\
				bio*-protein-inter & 514.0      & \textbf{514} & 514.0      & \textbf{514} & 514.0      & \textbf{514} \\
				bio*fission-yeast & 280.0      & \textbf{280} & 280.0      & \textbf{280} & 280.0      & \textbf{280} \\
				bio-CE-GN  & 195.0      & \textbf{195} & 195.0      & \textbf{195} & 195.7      & \textbf{195} \\
				bio-HS-HT  & 456.0      & \textbf{456} & 456.0      & \textbf{456} & 456.6      & \textbf{456} \\
				bio-CE-HT  & 690.0      & \textbf{690} & 690.0      & \textbf{690} & 690.0      & \textbf{690} \\
				bio-DM-HT  & 734.0      & \textbf{734} & 734.0      & \textbf{734} & 734.4      & \textbf{734} \\
				bio-DR-CX  & 310.0      & \textbf{310} & 310.0      & \textbf{310} & 311.2      & \textbf{310} \\
				bio-grid-worm & 578.0      & \textbf{578} & 578.0      & \textbf{578} & 578.0      & \textbf{578} \\
				bio-DM-CX  & 513.0      & \textbf{513} & 513.0      & \textbf{513} & 513.5      & \textbf{513} \\
				bio-HS-LC  & 380.0      & \textbf{380} & 380.0      & \textbf{380} & 380.0      & \textbf{380} \\
				bio-HS-CX  & 495.0      & \textbf{495} & 495.0      & \textbf{495} & 495.0      & \textbf{495} \\
				ca-Erdos992 & 440.0      & \textbf{440} & 440.0      & \textbf{440} & 440.0      & \textbf{440} \\
				bio-grid-yeast & 286.0      & \textbf{286} & 286.0      & \textbf{286} & 286.0      & \textbf{286} \\
				bio-grid-fruitfly & 1,522.0    & \textbf{1,522} & 1,522.0    & \textbf{1,522} & 1,522.0    & \textbf{1,522} \\
				bio-dmela  & 1,453.0    & \textbf{1,453} & 1,453.0    & \textbf{1,453} & 1,453.0    & \textbf{1,453} \\
				bio-grid-human & 1,785.0    & \textbf{1,785} & 1,785.0    & \textbf{1,785} & 1,785.0    & \textbf{1,785} \\
				Oregon-1   & 992.0      & \textbf{992} & 992.0      & \textbf{992} & 992.0      & \textbf{992} \\
				Oregon-2   & 961.0      & \textbf{961} & 961.0      & \textbf{961} & 961.0      & \textbf{961} \\
				skirt      & 909.0      & \textbf{909} & 909.0      & \textbf{909} & 1,121.9    & 1,111 \\
				cbuckle    & 290.8      & \textbf{290} & 291.8      & 291        & 368.7      & 363 \\
				cyl6       & 241.0      & \textbf{241} & 241.0      & \textbf{241} & 308.0      & 301 \\
				bio-human-gene2 & 392.0      & \textbf{392} & 392.0      & \textbf{392} & 392.6      & \textbf{392} \\
				case9      & 2,403.0    & \textbf{2,403} & 2,403.0    & \textbf{2,403} & 2,403.0    & \textbf{2,403} \\
				bio-CE-CX  & 2,549.0    & \textbf{2,549} & 2,549.0    & \textbf{2,549} & 2,552.7    & 2,552 \\
				Dubcova1   & 1,024.0    & \textbf{1,024} & 1,024.0    & \textbf{1,024} & 1,183.2    & 1,175 \\
				olafu      & 285.0      & \textbf{285} & 285.0      & \textbf{285} & 364.1      & 349 \\
				bio-WormNet-v3 & 2,072.4    & \textbf{2,072} & 2,072.1    & \textbf{2,072} & 2,075.8    & 2,073 \\
				ca-AstroPh & 2,055.0    & \textbf{2,055} & 2,055.0    & \textbf{2,055} & 2,073.0    & 2,068 \\
				raefsky4   & 322.0      & \textbf{322} & 322.0      & \textbf{322} & 404.1      & 393 \\
				raefsky3   & 300.0      & \textbf{300} & 300.0      & \textbf{300} & 375.4      & 369 \\
				ca-CondMat & 2,990.0    & \textbf{2,990} & 2,990.0    & \textbf{2,990} & 3,011.9    & 3,008 \\
				bio-human-gene1 & 825.3      & \textbf{825} & 825.8      & \textbf{825} & 827.1      & 826 \\
				rgg\_n\_2\_15\_s0 & 3,473.4    & \textbf{3,472} & 3,475.8    & 3,475      & 3,857.2    & 3,838 \\
				bio-pdb1HYS & 413.5      & \textbf{412} & 422.5      & 421        & 501.2      & 497 \\
				c-62ghs    & 15,161.0   & \textbf{15,161} & 15,161.0   & \textbf{15,161} & 15,248.2   & 15,240 \\
				bio-mouse-gene & 2,713.6    & \textbf{2,713} & 2,713.2    & \textbf{2,713} & 2,715.9    & 2,714 \\
				c-66b      & 21,183.0   & \textbf{21,183} & 21,183.0   & \textbf{21,183} & 21,224.8   & 21,220 \\
				sc-nasasrb & 1,044.6    & \textbf{1,044} & 1,045.0    & 1,045      & 1,385.7    & 1,374 \\
				Dubcova2   & 4,096.0    & \textbf{4,096} & 4,096.1    & \textbf{4,096} & 4,709.8    & 4,698 \\
				rgg\_n\_2\_16\_s0 & 6,536.7    & \textbf{6,535} & 6,543.5    & 6,539      & 7,287.3    & 7,270 \\
				sc-pkustk11 & 2,260.0    & \textbf{2,260} & 2,260.3    & \textbf{2,260} & 2,367.6    & 2,361 \\
				sc-pkustk13 & 1,223.2    & \textbf{1,222} & 1,223.8    & \textbf{1,222} & 1,462.3    & 1,449 \\
				soc-buzznet & 127.0      & \textbf{127} & 127.0      & \textbf{127} & 127.0      & \textbf{127} \\
				soc-LiveMocha & 1,424.0    & \textbf{1,424} & 1,424.0    & \textbf{1,424} & 1,424.0    & \textbf{1,424} \\
				kron\_g500-logn17 & 7,468.0    & \textbf{7,468} & 7,468.0    & \textbf{7,468} & 7,468.0    & \textbf{7,468} \\
				web-uk-2005 & 1,421.0    & \textbf{1,421} & 1,421.0    & \textbf{1,421} & 1,421.0    & \textbf{1,421} \\
				sc-shipsec1 & 7,668.9    & 7,664      & 7,666.3    & \textbf{7,660} & 8,805.2    & 8,788 \\
				\bottomrule
			\end{tabular}%
		}
		\label{tab:Repository-1}%
	\end{table}%
	
	\begin{table}[htbp]
		\renewcommand{\thetable}{VII-II}
		\centering
		\caption{Experimental results on part II of Network Repository}
		\resizebox{\linewidth}{!}{
			\begin{tabular}{lrrrrrrrr}
				\toprule
				Instance   & \multicolumn{2}{l}{DmDS} & \multicolumn{2}{l}{FastDS} & \multicolumn{2}{l}{ScBppw} \\
				\cmidrule{2-7}    Name       & \multicolumn{1}{l}{Avg.} & \multicolumn{1}{l}{Min.} & \multicolumn{1}{l}{Avg.} & \multicolumn{1}{l}{Min.} & 	\multicolumn{1}{l}{Avg.} & \multicolumn{1}{l}{Min.} \\
				\midrule
				Dubcova3   & 4,096.0    & \textbf{4,096} & 4,097.0    & \textbf{4,096} & 4,725.8    & 4,707 \\
				soc-douban & 8,364.0    & \textbf{8,364} & 8,364.0    & \textbf{8,364} & 8,364.0    & \textbf{8,364} \\
				web-arabic-2005 & 16,901.5   & \textbf{16,901} & 16,901.7   & \textbf{16,901} & 16,908.2   & 16,904 \\
				rec-dating & 11,734.0   & \textbf{11,734} & 11,734.3   & \textbf{11,734} & 11,736.4   & 11,735 \\
				sc-shipsec5 & 10,310.9   & 10,299     & 10,303.3   & \textbf{10,296} & 11,879.1   & 11,854 \\
				soc-academia & 28,325.0   & \textbf{28,325} & 28,325.0   & \textbf{28,325} & 28,332.1   & 28,329 \\
				kron\_g500-logn18 & 14,268.0   & \textbf{14,268} & 14,268.0   & \textbf{14,268} & 14,268.0   & \textbf{14,268} \\
				sc-pwtk    & 4,198.0    & \textbf{4,196} & 4,202.1    & 4,201      & 5,520.1    & 5,490 \\
				rec-libimseti-dir & 12,954.8   & \textbf{12,954} & 12,954.8   & \textbf{12,954} & 12,956.1   & \textbf{12,954} \\
				rgg\_n\_2\_18\_s0 & 23,354.0   & \textbf{23,349} & 23,374.1   & 23,367     & 26,427.0   & 26,407 \\
				ca-dblp-2012 & 46,138.0   & \textbf{46,138} & 46,138.0   & \textbf{46,138} & 46,264.6   & 46,254 \\
				bn*M87119044 & 6,964.8    & \textbf{6,961} & 6,964.5    & 6,962      & 7,492.0    & 7,481 \\
				bn*M87116523 & 6,589.4    & \textbf{6,586} & 6,591.2    & 6,590      & 7,117.0    & 7,104 \\
				ca-MathSciNet & 65,572.0   & \textbf{65,572} & 65,572.0   & \textbf{65,572} & 65,597.2   & 65,592 \\
				sc-msdoor  & 8,604.3    & 8,600      & 8,605.4    & \textbf{8,596} & 10,317.0   & 10,282 \\
				kron\_g500-logn19 & 27,748.0   & \textbf{27,748} & 27,748.0   & \textbf{27,748} & 27,748.0   & \textbf{27,748} \\
				soc-dogster & 26,249.0   & \textbf{26,249} & 26,249.0   & \textbf{26,249} & 26,252.5   & 26,251 \\
				bn*M87118347 & 7,861.5    & \textbf{7,855} & 7,861.3    & \textbf{7,855} & 8,580.9    & 8,547 \\
				soc-twitter-higgs & 14,689.5   & \textbf{14,689} & 14,689.5   & \textbf{14,689} & 14,690.5   & \textbf{14,689} \\
				soc-youtube & 89,732.0   & \textbf{89,732} & 89,732.0   & \textbf{89,732} & 89,734.9   & 89,733 \\
				web-it-2004 & 32,997.0   & \textbf{32,997} & 32,997.0   & \textbf{32,997} & 32,999.4   & 32,998 \\
				soc-flickr & 98,062.0   & \textbf{98,062} & 98,062.0   & \textbf{98,062} & 98,064.5   & 98,063 \\
				soc-delicious & 55,722.0   & \textbf{55,722} & 55,722.0   & \textbf{55,722} & 55,728.3   & 55,726 \\
				ca-coauthors-dblp & 35,597.4   & \textbf{35,596} & 35,598.2   & 35,597     & 37,092.5   & 37,069 \\
				bn*M87110650 & 26,584.7   & \textbf{26,579} & 26,588.7   & 26,583     & 28,651.1   & 28,597 \\
				soc-FourSquare & 60,979.0   & \textbf{60,979} & 60,979.0   & \textbf{60,979} & 60,979.0   & \textbf{60,979} \\
				bn*873*1-bg & 19,680.2   & \textbf{19,673} & 19,683.5   & 19,677     & 20,997.5   & 20,982 \\
				bn*M87124152 & 17,154.5   & \textbf{17,149} & 17,156.0   & 17,150     & 18,571.3   & 18,551 \\
				bn*896*2-bg & 21,322.9   & \textbf{21,316} & 21,327.6   & 21,318     & 22,897.0   & 22,861 \\
				bn*869*1-bg & 23,432.2   & \textbf{23,425} & 23,434.7   & 23,428     & 24,987.5   & 24,955 \\
				bn*M87116517 & 19,283.1   & \textbf{19,275} & 19,287.7   & \textbf{19,275} & 20,952.5   & 20,927 \\
				bn*914*1-bg & 25,186.2   & \textbf{25,181} & 25,186.1   & 25,183     & 26,809.0   & 26,788 \\
				bn*864*1-bg & 23,505.2   & \textbf{23,497} & 23,507.8   & 23,501     & 25,000.5   & 24,984 \\
				bn*M87124670 & 25,621.4   & \textbf{25,612} & 25,627.2   & 25,615     & 28,093.0   & 28,069 \\
				bn*878*1-bg & 23,361.2   & \textbf{23,355} & 23,362.2   & \textbf{23,355} & 24,930.1   & 24,906 \\
				bn*914*2 & 24,485.8   & \textbf{24,478} & 24,490.0   & 24,483     & 26,114.4   & 26,068 \\
				bn*889*1 & 23,261.2   & \textbf{23,256} & 23,262.7   & 23,260     & 24,865.7   & 24,848 \\
				bn*917*1 & 22,010.1   & \textbf{22,000} & 22,013.6   & 22,004     & 23,524.7   & 23,499 \\
				bn*M87104201 & 15,787.2   & \textbf{15,781} & 15,787.4   & \textbf{15,781} & 17,151.9   & 17,132 \\
				bn*916*1 & 23,081.3   & \textbf{23,075} & 23,083.8   & 23,078     & 24,740.8   & 24,696 \\
				bn*869*2-bg & 25,097.9   & \textbf{25,088} & 25,100.6   & 25,092     & 26,814.0   & 26,746 \\
				bn*890*2 & 23,809.4   & \textbf{23,802} & 23,811.0   & 23,803     & 25,377.2   & 25,361 \\
				bn*M87115663 & 18,897.3   & \textbf{18,889} & 18,899.7   & 18,891     & 20,344.6   & 20,323 \\
				bn*M87124029 & 26,998.8   & \textbf{26,988} & 27,006.5   & 26,995     & 29,432.9   & 29,406 \\
				bn*M87127186 & 17,676.4   & 17,672     & 17,678.8   & \textbf{17,671} & 19,173.1   & 19,153 \\
				bn*913*2 & 22,474.8   & \textbf{22,469} & 22,476.2   & 22,472     & 24,003.5   & 23,976 \\
				bn*868*1-bg & 24,490.2   & \textbf{24,480} & 24,492.9   & 24,486     & 26,108.8   & 26,093 \\
				bn*868*2-bg & 24,939.8   & \textbf{24,933} & 24,941.1   & 24,934     & 26,540.4   & 26,482 \\
				\bottomrule
			\end{tabular}%
		}
		\label{tab:Repository-2}%
	\end{table}%
	
	\begin{table}[htbp]
		\renewcommand{\thetable}{VII-III}
		\centering
		\caption{Experimental results on part III of Network Repository}
		\resizebox{0.9\linewidth}{!}{
			\begin{tabular}{p{6em}rrrrrrrr}
				\toprule
				Instance   & \multicolumn{2}{l}{DmDS} & \multicolumn{2}{l}{FastDS} & \multicolumn{2}{l}{ScBppw} \\
				\cmidrule{2-7}    Name       & \multicolumn{1}{l}{Avg.} & \multicolumn{1}{l}{Min.} & \multicolumn{1}{l}{Avg.} & \multicolumn{1}{l}{Min.} & 	\multicolumn{1}{l}{Avg.} & \multicolumn{1}{l}{Min.} \\
				\midrule
				bn*M87101698 & 21,064.1   & \textbf{21,057} & 21,066.7   & 21,061     & 22,736.2   & 22,709 \\
				bn*911*2 & 25,251.6   & \textbf{25,241} & 25,252.2   & 25,245     & 26,940.6   & 26,910 \\
				bn*M87105966 & 21,680.3   & \textbf{21,675} & 21,686.1   & 21,677     & 23,389.2   & 23,355 \\
				bn*865*1-bg & 21,631.0   & \textbf{21,621} & 21,631.8   & 21,622     & 23,122.4   & 23,097 \\
				bn*871*2-bg & 23,030.7   & \textbf{23,026} & 23,034.9   & 23,027     & 24,594.4   & 24,559 \\
				bn*M87104509 & 27,461.1   & \textbf{27,448} & 27,467.2   & 27,459     & 29,796.0   & 29,766 \\
				bn*867*1-bg & 25,429.7   & \textbf{25,423} & 25,431.2   & 25,426     & 27,115.4   & 27,090 \\
				bn*918*1 & 24,366.3   & 24,359     & 24,366.4   & \textbf{24,358} & 26,088.8   & 26,057 \\
				bn*M87108808 & 22,359.2   & \textbf{22,353} & 22,360.1   & 22,354     & 24,112.5   & 24,079 \\
				bn*M87125286 & 19,797.9   & \textbf{19,792} & 19,800.3   & 19,795     & 21,281.7   & 21,264 \\
				rec-epinion & 9,037.0    & \textbf{9,037} & 9,037.0    & \textbf{9,037} & 9,038.0    & 9,038 \\
				bn*M87110148 & 20,458.2   & \textbf{20,448} & 20,466.3   & 20,454     & 22,202.1   & 22,179 \\
				bn*M87110670 & 26,034.8   & 26,022     & 26,041.0   & \textbf{26,021} & 28,495.5   & 28,446 \\
				bn*M87125334 & 30,331.1   & \textbf{30,324} & 30,343.3   & 30,333     & 33,271.3   & 33,228 \\
				bn*874*2-bg & 25,406.0   & \textbf{25,400} & 25,409.4   & \textbf{25,400} & 27,106.0   & 27,090 \\
				soc-digg   & 66,155.0   & \textbf{66,155} & 66,155.0   & \textbf{66,155} & 66,156.2   & \textbf{66,155} \\
				bn*M87101705 & 18,646.9   & \textbf{18,638} & 18,642.6   & 18,639     & 20,267.4   & 20,239 \\
				bn*M87124563 & 16,617.1   & 16,609     & 16,617.2   & \textbf{16,608} & 18,212.9   & 18,182 \\
				bn*876*2-bg & 29,366.0   & \textbf{29,349} & 29,366.2   & 29,355     & 31,183.8   & 31,159 \\
				bn*M87125521 & 30,312.1   & \textbf{30,297} & 30,318.0   & 30,308     & 33,179.4   & 33,153 \\
				bn*886*1 & 24,542.1   & \textbf{24,528} & 24,539.2   & 24,534     & 26,228.3   & 26,182 \\
				bn*M87107242 & 31,874.1   & \textbf{31,862} & 31,882.8   & 31,876     & 34,356.0   & 34,326 \\
				bn*912*\_2 & 24,889.2   & \textbf{24,880} & 24,889.4   & 24,881     & 26,554.8   & 26,540 \\
				bn*M87113878 & 14,327.7   & \textbf{14,321} & 14,330.8   & 14,325     & 15,588.2   & 15,568 \\
				bn*M87121714 & 28,440.6   & \textbf{28,432} & 28,449.3   & 28,441     & 31,131.7   & 31,096 \\
				bn*876*\_1-bg & 29,380.3   & \textbf{29,370} & 29,387.7   & 29,382     & 31,254.5   & 31,218 \\
				bn*M87118219 & 23,066.9   & \textbf{23,062} & 23,074.7   & 23,064     & 24,902.6   & 24,875 \\
				bn*M87127677 & 23,412.1   & \textbf{23,401} & 23,412.7   & 23,403     & 25,081.4   & 25,052 \\
				bn*M87118759 & 28,597.4   & \textbf{28,587} & 28,604.9   & 28,597     & 31,307.9   & 31,270 \\
				kron\_*logn20 & 53,350.0   & \textbf{53,350} & 53,350.0   & \textbf{53,350} & 53,350.0   & \textbf{53,350} \\
				bn*870*\_1-bg & 27,821.9   & \textbf{27,811} & 27,824.6   & 27,817     & 29,553.7   & 29,531 \\
				bn*M87125691 & 31,566.3   & \textbf{31,552} & 31,575.0   & 31,567     & 34,417.3   & 34,384 \\
				bn*M87111392 & 23,242.1   & \textbf{23,219} & 23,241.2   & 23,231     & 25,009.9   & 24,985 \\
				bn*M87109786 & 29,334.7   & \textbf{29,324} & 29,339.7   & 29,331     & 32,097.8   & 32,072 \\
				bn*M87102230 & 28,964.7   & \textbf{28,957} & 28,972.1   & 28,965     & 31,551.3   & 31,497 \\
				bn*870*\_2-bg & 27,226.1   & \textbf{27,218} & 27,228.0   & 27,222     & 28,981.8   & 28,943 \\
				bn*M87128194 & 23,101.5   & \textbf{23,088} & 23,106.6   & 23,097     & 25,121.4   & 25,094 \\
				bn*M87129974 & 32,349.2   & \textbf{32,334} & 32,351.0   & 32,336     & 35,012.9   & 34,941 \\
				bn*M87125330 & 30,379.7   & \textbf{30,366} & 30,387.6   & 30,378     & 33,276.7   & 33,228 \\
				bn*M87119472 & 26,788.3   & \textbf{26,775} & 26,799.1   & 26,787     & 29,618.7   & 29,599 \\
				bn*M87103674 & 29,427.4   & \textbf{29,404} & 29,434.9   & 29,425     & 32,025.4   & 31,999 \\
				bn*M87118954 & 19,318.8   & \textbf{19,306} & 19,325.6   & 19,316     & 21,721.8   & 21,703 \\
				bn*M87123142 & 30,256.5   & \textbf{30,253} & 30,264.6   & 30,255     & 33,153.3   & 33,121 \\
				bn*M87125989 & 32,248.0   & \textbf{32,242} & 32,259.1   & 32,249     & 35,412.7   & 35,368 \\
				bn*M87104300 & 24,855.2   & \textbf{24,848} & 24,864.3   & 24,851     & 27,446.4   & 27,375 \\
				bn*M87127667 & 19,365.5   & \textbf{19,355} & 19,368.0   & 19,359     & 21,772.6   & 21,734 \\
				bn*M87128519 & 21,196.1   & \textbf{21,186} & 21,199.7   & 21,191     & 23,111.7   & 23,087 \\
				bn*M87113679 & 28,895.0   & \textbf{28,886} & 28,904.4   & 28,888     & 31,765.1   & 31,713 \\
				\bottomrule
			\end{tabular}%
			\label{tab:Repository-3}%
		}
	\end{table}%

	\begin{table}[htbp]
		\renewcommand{\thetable}{VII-IV}
		\centering
		\caption{Experimental results on part IV of Network Repository}
		\resizebox{\linewidth}{!}{
			\begin{tabular}{p{7.2em}rrrrrrrr}
				\toprule
				Instance   & \multicolumn{2}{l}{DmDS} & \multicolumn{2}{l}{FastDS} & \multicolumn{2}{l}{ScBppw} \\
				\cmidrule{2-7}    Name       & \multicolumn{1}{l}{Avg.} & \multicolumn{1}{l}{Min.} & \multicolumn{1}{l}{Avg.} & \multicolumn{1}{l}{Min.} & 	\multicolumn{1}{l}{Avg.} & \multicolumn{1}{l}{Min.} \\
				\midrule
				bn*M87115834 & 29,321.9   & \textbf{29,309} & 29,331.4   & 29,310     & 32,194.8   & 32,152 \\
				bn*M87117515 & 28,704.7   & \textbf{28,688} & 28,719.0   & 28,707     & 31,984.5   & 31,960 \\
				bn*M87123456 & 32,750.9   & \textbf{32,737} & 32,762.7   & 32,749     & 35,956.5   & 35,874 \\
				ca-IMDB    & 120,570.3  & \textbf{120,570} & 120,570.8  & \textbf{120,570} & 120,573.7  & 120,572 \\
				sc-ldoor   & 19,746.2   & 19,740     & 19,743.0   & \textbf{19,731} & 23,621.6   & 23,582 \\
				bn*M87122310 & 16,712.0   & \textbf{16,701} & 16,716.1   & 16,707     & 18,810.1   & 18,772 \\
				bn*M87102575 & 17,672.6   & \textbf{17,659} & 17,675.6   & 17,664     & 19,982.8   & 19,965 \\
				bn*M87117093 & 22,191.3   & \textbf{22,179} & 22,194.1   & 22,188     & 24,643.5   & 24,605 \\
				bn*M87126525 & 15,326.8   & \textbf{15,316} & 15,333.4   & 15,323     & 17,448.5   & 17,424 \\
				ca-holly*-2009 & 48,733.5   & \textbf{48,728} & 48,746.6   & 48,743     & 50,369.7   & 50,351 \\
				inf-roadNet-PA & 326,866.6  & \textbf{326,845} & 326,939.6  & 326,921    & 332,187.0  & 332,099 \\
				rt-retweet-crawl & 75,740.0   & \textbf{75,740} & 75,740.0   & \textbf{75,740} & 75,740.2   & \textbf{75,740} \\
				soc-you*snap & 213,122.0  & \textbf{213,122} & 213,122.0  & \textbf{213,122} & 213,123.0  & \textbf{213,122} \\
				soc-lastfm & 67,226.0   & \textbf{67,226} & 67,226.0   & \textbf{67,226} & 67,226.0   & \textbf{67,226} \\
				kron*-logn21 & 102,238.0  & \textbf{102,238} & 102,238.0  & \textbf{102,238} & 102,238.0  & \textbf{102,238} \\
				soc-pokec  & 207,311.2  & 207,307    & 207,308.7  & \textbf{207,303} & 207,394.9  & 207,385 \\
				tech-as-skitter & 181,718.0  & \textbf{181,717} & 181,717.9  & \textbf{181,717} & 181,867.8  & 181,856 \\
				soc-flickr-und & 295,700.0  & \textbf{295,700} & 295,700.0  & \textbf{295,700} & 295,704.8  & 295,702 \\
				web-wiki*2009 & 346,582.3  & \textbf{346,580} & 346,583.9  & 346,582    & 346,684.3  & 346,670 \\
				web-wiki*growth & 116,814.5  & \textbf{116,814} & 116,814.6  & \textbf{116,814} & 116,819.7  & 116,817 \\
				inf-roadNet-CA & 586,395.0  & \textbf{586,364} & 586,492.8  & 586,481    & 595,852.8  & 595,760 \\
				web-baidu-baike & 276,745.0  & \textbf{276,745} & 276,745.0  & \textbf{276,745} & 276,748.3  & 276,747 \\
				tech-ip    & 154.9      & \textbf{154} & 154.8      & \textbf{154} & 156.7      & 155 \\
				soc-flixster & 91,019.0   & \textbf{91,019} & 91,019.0   & \textbf{91,019} & 91,019.0   & \textbf{91,019} \\
				memchip    & 464,234.4  & \textbf{464,194} & 464,263.9  & 464,215    & 466,616.9  & 466,512 \\
				socfb-B-anon & 187,030.0  & \textbf{187,030} & 187,030.0  & \textbf{187,030} & 187,030.2  & \textbf{187,030} \\
				soc-orkut  & 110,566.1  & \textbf{110,550} & 110,599.3  & 110,578    & 111,107.2  & 111,080 \\
				soc-orkut-dir & 93,619.5   & \textbf{93,603} & 93,655.4   & 93,630     & 94,005.9   & 93,982 \\
				socfb-A-anon & 201,690.0  & \textbf{201,690} & 201,690.0  & \textbf{201,690} & 201,690.3  & \textbf{201,690} \\
				wiki*link\_en & 212,876.0  & \textbf{212,876} & 212,876.0  & \textbf{212,876} & 212,876.0  & \textbf{212,876} \\
				Freescale1 & 597,141.5  & \textbf{597,132} & 597,162.7  & 597,152    & 598,622.1  & 598,580 \\
				patents    & 632,687.8  & \textbf{632,665} & 632,776.5  & 632,733    & 634,033.5  & 633,989 \\
				dblp-author & 1,411,236.0  & \textbf{1,411,236} & 1,411,236.3  & \textbf{1,411,236} & 1,411,237.5  & \textbf{1,411,236} \\
				delicious-ti & 140,676.0  & \textbf{140,676} & 140,676.0  & \textbf{140,676} & 140,676.4  & \textbf{140,676} \\
				soc-livejournal & 793,887.9  & \textbf{793,887} & 793,888.6  & 793,888    & 793,994.6  & 793,987 \\
				delaunay\_n22 & 641,622.1  & 641,480    & 641,542.6  & \textbf{641,394} & 688,804.3  & 688,628 \\
				channel*b050 & 343,731.6  & 343,328    & 343,003.4  & \textbf{342,733} & 392,503.0  & 392,415 \\
				ljournal-2008 & 1,005,858.0  & \textbf{1,005,858} & 1,005,858.0  & \textbf{1,005,858} & 1,005,986.5  & 1,005,974 \\
				soc-ljour*2008 & 1,005,858.0  & \textbf{1,005,858} & 1,005,858.0  & \textbf{1,005,858} & 1,005,980.1  & 1,005,970 \\
				sc-rel9    & 119,305.4  & 119,234    & 119,247.4  & \textbf{119,208} & 130,424.5  & 123,108 \\
				indochina-2004 & 399,928.9  & \textbf{399,922} & 399,952.9  & 399,949    & 400,516.5  & 400,507 \\
				soc-live*groups & 1,071,123.0  & \textbf{1,071,123} & 1,071,123.0  & \textbf{1,071,123} & 1,071,124.0  & \textbf{1,071,123} \\
				delaunay\_n23 & 1,294,633.7  & \textbf{1,291,532} & 1,295,994.0  & 1,293,999  & 1,377,716.0  & 1,377,473 \\
				friendster & 656,463.0  & \textbf{656,463} & 656,463.0  & \textbf{656,463} & 656,466.4  & 656,464 \\
				wb-edu     & 833,550.9  & \textbf{833,546} & 833,557.0  & 833,553    & 834,338.9  & 834,307 \\
				twitter\_mpi & 566,313.0  & \textbf{566,313} & 566,313.0  & \textbf{566,313} & 566,313.5  & \textbf{566,313} \\
				inf-germany\_os & 3,815,331.6  & 3,813,153  & 3,800,840.1  & \textbf{3,800,262} & 3,847,240.5  & 3,846,783 \\
				hugetrace-00010 & 3,301,282.3  & 3,274,050  & 3,256,924.8  & \textbf{3,253,894} & 3,392,443.6  & 3,391,758 \\
				dbpedia-link & 1,968,836.0  & \textbf{1,968,836} & 1,968,837.9  & 1,968,837  & 1,968,839.0  & 1,968,839 \\
				inf-road\_central & 4,581,139.1  & \textbf{4,580,173} & 4,585,195.9  & 4,584,680  & 4,615,658.1  & 4,615,457 \\
				road\_central & 4,580,701.5  & \textbf{4,579,690} & 4,583,212.3  & 4,582,927  & 4,615,769.5  & 4,615,468 \\
				hugetrace-*20 & 4,501,029.6  & 4,496,071  & 4,428,935.2  & \textbf{4,412,314} & 4,509,342.4  & 4,508,908 \\
				delaunay\_n24 & 2,690,868.0  & 2,672,734  & 2,661,379.5  & \textbf{2,651,597} & 2,755,337.8  & 2,755,004 \\
				hugebubs-*20 & 6,082,340.5  & 6,037,152  & 6,034,858.7  & 6,028,600  & 5,985,251.5  & \textbf{5,984,831} \\
				inf-road-usa & 7,817,568.8  & \textbf{7,804,560} & 7,826,434.1  & 7,818,748  & 7,855,036.8  & 7,854,395 \\
				inf-europe\_os & 17,385,174.3  & 17,359,873 & 17,205,514.2  & 17,191,561 & 17,019,317.7  & \textbf{17,017,316} \\
				socfb-uci-uni & 865,675.0  & \textbf{865,675} & 865,675.0  & \textbf{865,675} & 865,675.0  & \textbf{865,675} \\
				\bottomrule
			\end{tabular}%
			\label{tab:Repository-4}%
		}
	\end{table}%
	
	\subsubsection{Comparison of Convergence Time}

	We evaluate the convergence speed of DmDS by comparing the convergence time for which  an algorithm  finds the final solution for each instance. 
	Regarding the standard benchmarks, since the size of the instances is small, all of the three algorithms have a short convergence time. Here we report the averaged convergence time of the  three algorithms on these benchmarks. See Table~\ref{tab:avg_time_standard} for the results: While ScBppw has the shortest convergence time (due to the lack of further improvement after the solution is obtained), DmDS has a  faster convergence time than FastDS.  
	
	Table~\ref{tab:time} reports the convergence time of the three algorithms for the large sparse real-world instances,  on which at least two algorithms obtain the smallest  ``Min.'' value,  in total 113 instances. For each of such instances, the shortest convergence time among the three algorithms is highlighted in bold, and we use the notation ``N/A'' to indicate that the corresponding algorithm did not find a best solution (the smallest ``Min.'' value).  In addition, the running time less than $10^{-4}$ is reported as 0. As shown in the table,  DmDS has a shorter convergence time than other two algorithms on 41 instances,  while  FastDS and ScBppw have this advantage on  29 and 43 instances, respectively.  These results indicate that  in terms of the convergence time, ScBppw performs slightly better than DmDS  and DmDS performs better than FastDS. Note that ScBppw can not find the best solution on 54 (out of 113) instances, and both DmDS and FastDS can find a best solution on  the 113 instances. This indicates that DmDS is superior to the other two algorithms on these instances. 
	
	\begin{table}[htbp]
		\setcounter{table}{7}
		\centering
		\caption{Averaged convergence time for instances with ties}
		\resizebox{\linewidth}{!}{
			\begin{tabular}{p{7em}rrr|p{7em}rrr}
				\toprule
				Instance   & \multicolumn{1}{l}{DmDS} & \multicolumn{1}{l}{FastDS} & \multicolumn{1}{l|}{ScBppw} & c-62ghs    & 0.099      & \textbf{0.050} & N/A \\
				\cmidrule{1-4}    Wiki-Vote  & 0.032      & 0.020      & \textbf{0.008} & bio-mouse-gene & \textbf{91.292} & 185.752    & N/A \\
				p2p-Gnutella04 & 0.039      & 0.057      & \textbf{0.038} & c-66b      & 0.074      & \textbf{0.048} & N/A \\
				p2p-Gnutella25 & 0.029      & 0.017      & \textbf{0.005} & Dubcova2   & \textbf{0.357} & 0.370      & N/A \\
				cit-HepTh  & 0.603      & \textbf{0.444} & N/A        & sc-pkustk11 & 599.457    & \textbf{441.789} & N/A \\
				p2p-Gnutella24 & 0.039      & \textbf{0.032} & 0.048      & sc-pkustk13 & \textbf{186.771} & 443.232    & N/A \\
				cit-HepPh  & \textbf{297.149} & 336.827    & N/A        & soc-buzznet & 794.951    & 628.717    & \textbf{0.087} \\
				p2p-Gnutella30 & 0.052      & 0.039      & \textbf{0.019} & soc-LiveMocha & 0.847      & 0.418      & \textbf{0.114} \\
				p2p-Gnutella31 & 0.082      & 0.063      & \textbf{0.051} & kron\_*logn17 & 1.586      & \textbf{1.251} & 1.894 \\
				soc-Epinions1 & 0.152      & 0.098      & \textbf{0.071} & web-uk-2005 & 0.504      & 0.491      & \textbf{0.191} \\
				soc-Slash*0811 & 0.177      & \textbf{0.129} & 0.176      & Dubcova3   & \textbf{13.442} & 198.133    & N/A \\
				soc-Slash*0902 & 0.208      & \textbf{0.186} & 0.25       & soc-douban & 0.097      & 0.088      & \textbf{0.055} \\
				email-EuAll & 0.126      & 0.116      & \textbf{0.057} & web-arabic-2005 & 172.160    & \textbf{39.469} & N/A \\
				web-Stanford & 584.936    & \textbf{573.480} & N/A        & rec-dating & \textbf{25.436} & 196.068    & N/A \\
				wiki-Talk  & 2.854      & 2.640      & \textbf{1.376} & soc-academia & \textbf{2.854} & 3.342      & N/A \\
				\cmidrule{1-4}    as-22july06 & 0.011      & 0.015      & \textbf{0.006} & kron\_*logn18 & 4.025      & 2.087      & \textbf{0.709} \\
				cond-mat-2005 & \textbf{0.254} & 0.561      & N/A        & rec-libimseti-dir & \textbf{75.206} & 443.945    & 248.972 \\
				kron\_*logn16 & 1.712      & 0.894      & \textbf{0.236} & ca-dblp-2012 & \textbf{4.947} & 12.918     & N/A \\
				luxem*\_osm & 663.455    & \textbf{537.778} & N/A        & ca-MathSciNet & \textbf{0.919} & 1.424      & N/A \\
				caida*Level & \textbf{52.341} & 120.622    & N/A        & kron\_*logn19 & 9.599      & 5.074      & \textbf{1.759} \\
				coAuthors*seer & \textbf{1.375} & 3.094      & N/A        & soc-dogster & \textbf{7.428} & 9.708      & N/A \\
				citation*seer & \textbf{16.332} & 20.198     & N/A        & bn*M87118347 & 926.687    & \textbf{892.579} & N/A \\
				coAuth*DBLP & \textbf{1.192} & 3.308      & N/A        & soc-twitter-higgs & \textbf{384.006} & 427.345    & 717.939 \\
				cnr-2000   & \textbf{27.705} & 94.149     & N/A        & soc-youtube & 1.579      & \textbf{1.493} & N/A \\
				coPapers*seer & \textbf{353.499} & 433.597    & N/A        & web-it-2004 & 1.169      & \textbf{0.664} & N/A \\
				\cmidrule{1-4}    bio-yeast  & 0.001      & 0.001      & \textbf{0} & soc-flickr & \textbf{5.324} & 11.946     & N/A \\
				bio*protein-inter & 0.001      & 0.001      & \textbf{0} & soc-delicious & \textbf{1.088} & 1.392      & N/A \\
				bio*fission-yeast & 0.005      & 0.004      & \textbf{0} & soc-FourSquare & 2.181      & 1.507      & \textbf{1.152} \\
				bio-CE-GN  & 0.020      & 0.012      & \textbf{0.009} & bn*M87116517 & \textbf{972.049} & 981.889    & N/A \\
				bio-HS-HT  & 0.006      & 0.005      & \textbf{0} & bn*878*\_1-bg & 969.877    & \textbf{968.036} & N/A \\
				bio-CE-HT  & 0.001      & 0.001      & \textbf{0} & bn*M87104201 & 972.713    & \textbf{945.370} & N/A \\
				bio-DM-HT  & 0.005      & 0.004      & \textbf{0.002} & rec-epinion & 10.045     & \textbf{5.271} & N/A \\
				bio-DR-CX  & \textbf{0.360} & 0.667      & 2.071      & bn*874*\_2-bg & 976.101    & \textbf{975.953} & N/A \\
				bio-grid-worm & 0.002      & 0.003      & \textbf{0} & soc-digg   & 3.454      & 2.326      & \textbf{2.08} \\
				bio-DM-CX  & \textbf{0.107} & 0.136      & 8.192      & kron\_*logn20 & 25.652     & 14.378     & \textbf{4.697} \\
				bio-HS-LC  & \textbf{0.013} & 0.016      & 0.03       & ca-IMDB    & 241.239    & \textbf{228.911} & N/A \\
				bio-HS-CX  & 0.218      & \textbf{0.142} & 2.076      & rt-retweet-crawl & 2.863      & 1.663      & \textbf{1.334} \\
				ca-Erdos992 & 0.002      & 0.003      & \textbf{0.002} & soc-you*-snap & 3.140      & \textbf{2.632} & 3.729 \\
				bio-grid-yeast & 0.057      & \textbf{0.047} & 0.062      & soc-lastfm & 2.079      & 2.326      & \textbf{1.357} \\
				bio-grid-fruitfly & 0.007      & 0.008      & \textbf{0.002} & kron\_*logn21 & 78.569     & 41.929     & \textbf{19.113} \\
				bio-dmela  & 0.004      & 0.007      & \textbf{0.004} & tech-as-skitter & \textbf{489.562} & 720.781    & N/A \\
				bio-grid-human & 0.016      & 0.013      & \textbf{0.004} & soc-flickr-und & 10.337     & \textbf{7.794} & N/A \\
				\cmidrule{1-4}    Oregon-1   & 0.008      & 0.006      & \textbf{0.003} & web-wiki*owth & \textbf{486.888} & 550.193    & N/A \\
				Oregon-2   & 0.010      & 0.009      & \textbf{0.002} & web-baidu-baike & 25.677     & \textbf{23.667} & N/A \\
				skirt      & \textbf{3.510} & 7.144      & N/A        & tech-ip    & 611.045    & \textbf{293.682} & N/A \\
				cyl6       & 0.104      & \textbf{0.084} & N/A        & soc-flixster & 3.450      & 4.113      & \textbf{2.195} \\
				bio-hu*gene2 & \textbf{48.279} & 84.490     & 296.277    & socfb-B-anon & 24.674     & 15.172     & \textbf{12.04} \\
				case9      & 17.373     & 51.687     & \textbf{0.163} & socfb-A-anon & 30.082     & \textbf{27.239} & 41.825 \\
				bio-CE-CX  & \textbf{36.664} & 45.116     & N/A        & wikipedia\_link\_en & 38.314     & 22.601     & \textbf{22.081} \\
				Dubcova1   & \textbf{0.208} & 0.365      & N/A        & dblp-author & \textbf{16.355} & 347.832    & 639.67 \\
				olafu      & \textbf{2.789} & 4.428      & N/A        & delicious-ti & 31.238     & 20.972     & \textbf{7.518} \\
				bio-WormNet-v3 & \textbf{194.875} & 390.686    & N/A        & ljournal-2008 & \textbf{187.721} & 445.812    & N/A \\
				ca-AstroPh & \textbf{0.171} & 0.270      & N/A        & soc-ljou*-2008 & \textbf{127.607} & 422.618    & N/A \\
				raefsky4   & \textbf{0.430} & 0.554      & N/A        & soc*groups & 190.952    & \textbf{126.396} & 300.159 \\
				raefsky3   & \textbf{0.162} & 0.293      & N/A        & friendster & 147.052    & \textbf{80.067} & N/A \\
				ca-CondMat & \textbf{0.986} & 1.784      & N/A        & twitter\_mpi & 140.403    & 72.307     & \textbf{56.794} \\
				bio-hu*gene1 & 252.077    & \textbf{153.672} & N/A        & socfb-uci-uni & 409.629    & 237.207    & \textbf{70.993} \\
				\bottomrule
			\end{tabular}%
		}
		\label{tab:time}%
	\end{table}%

    \begin{figure}[htbp]
		
		\includegraphics[width=0.9\linewidth]{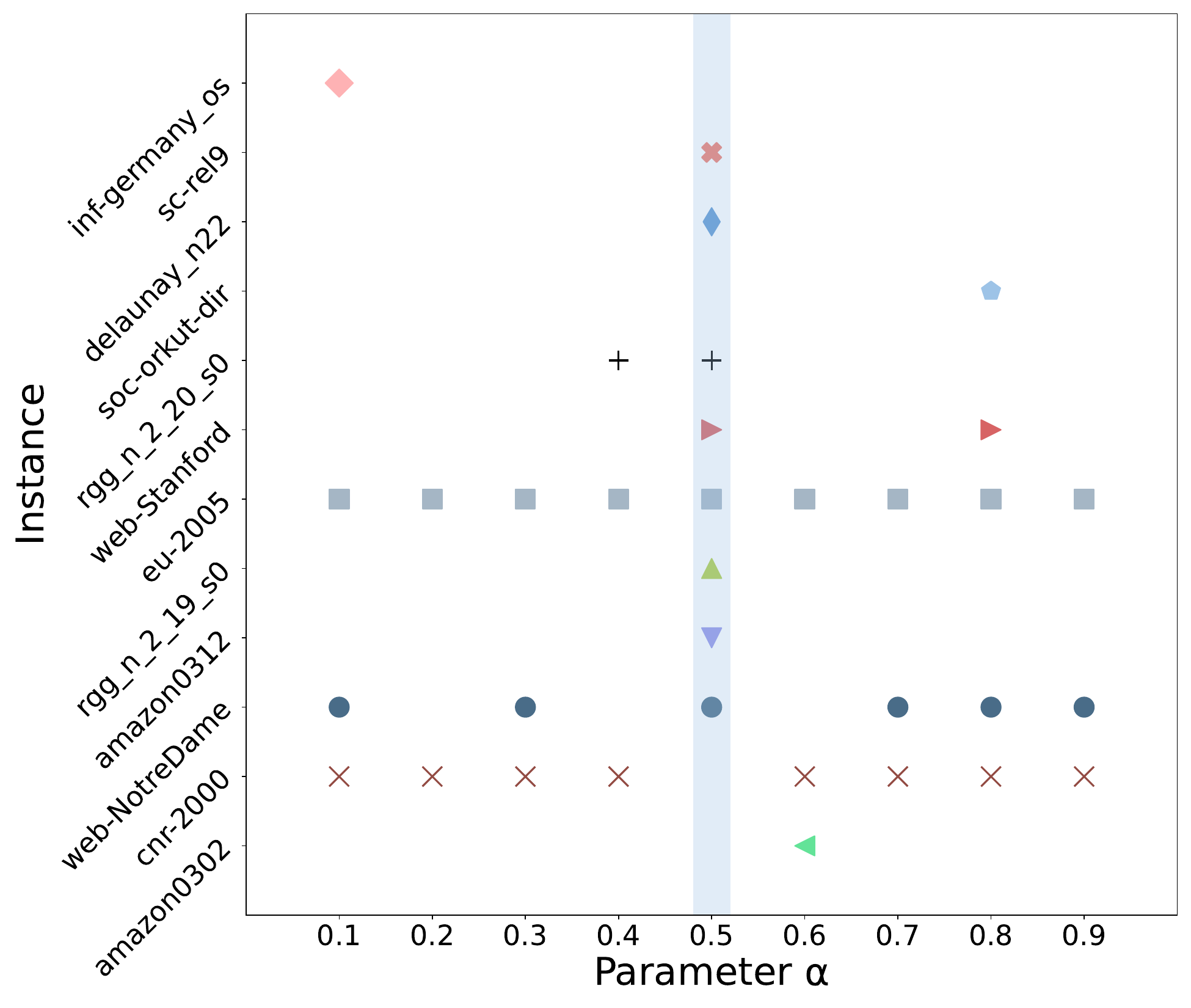}
		\caption{Results of parameter tuning for parameter for $\alpha$ in DmDS}
		\label{fig:para}
	\end{figure}
 
	\subsubsection{Parameter tuning} \label{empirical}
	We conduct an  experiment to tune the parameter $\alpha$  in DmDS (Algorithm \ref{alg3}).  We select four instances from each of SNAP, DIMACS10, and Network Repository benchmarks as representatives, in total 12 instances with vertex counts from $10^5$ to $10^7$.  We test the performance of DmDS by using distinct values  of  $\alpha$ in \{0.1, 0.2, 0.3, 0.4, 0.5, 0.6, 0.7, 0.8, 0.9\}, and the 
	experiment setup  follows that in Section \ref{setup}.  We run DmDS with a fixed value of $\alpha$ and 10 random seeds, and report the best solution.  Figure~\ref{fig:para} shows the experimental results, where the x-axis represents values of $\alpha$, and the y-axis represents instances. When DmDS finds the smallest dominating set of instance $A$ at a specific $\alpha$ value (denoted as $a$), a marker is plotted at the coordinates $(a, A)$ on the figure to indicate this finding.
	It can be seen from the data in the figure that when $\alpha=0.5$ DmDS  can find a best solution on eight (the maximum number) instances. Therefore, DmDS achieves the best performance with $\alpha=0.5$.
	
	\subsubsection{Initial Solutions}
	Table~\ref{tab:initial-1} and~\ref{tab:initial-2}  respectively report the size of initial solutions of  large sparse real-world benchmark instances obtained by  Greed() and Perturbation(), and the running time constructing the initial solutions. Note that only the instances for  which the two  solutions obtained by the two algorithms have different sizes are reported, in total 200 instances. From the data in the tables we can see that  the Greed() and the Perturbation() have a similar performance,  and each can obtain a better solution than the other on about half of the instances.  In addition,  the running time of these two algorithms is very close. This provides us with a method of improving the quality of the initial solution, that is,  constructing two solutions by  the two algorithms and selecting  a better one as the initial solution.

	\begin{table}[htbp]
		\renewcommand{\thetable}{IX-I}
		\centering
		\caption{Comparison of initial solutions constructed by two approaches}
		\resizebox{\linewidth}{!}{
			\begin{tabular}{p{6em}rrrr|p{7em}rrrr}
				\toprule
				Instance   & \multicolumn{2}{c}{Perturbation} & \multicolumn{2}{c|}{Greed} & Instance   & \multicolumn{2}{c}{Perturbation} & \multicolumn{2}{c}{Greed} \\
				Name       & \multicolumn{1}{l}{Solution} & \multicolumn{1}{l}{Time} & \multicolumn{1}{l}{Solution} & \multicolumn{1}{l|}{Time} & Name       & \multicolumn{1}{l}{Solution} & \multicolumn{1}{l}{Time} & \multicolumn{1}{l}{Solution} & \multicolumn{1}{l}{Time} \\
				\midrule
				amazon0302 & \textbf{38124} & 0.200      & 38130      & 0.185      & ca-AstroPh & 2121       & 0.021      & \textbf{2117} & 0.020  \\
				amazon0312 & \textbf{47885} & 0.505      & 47910      & 0.481      & ca-coau*dblp & 37739      & 1.622      & \textbf{37726} & 1.601  \\
				amazon0505 & \textbf{49692} & 0.521      & 49718      & 0.506      & ca-CondMat & \textbf{3032} & 0.017      & 3035       & 0.015  \\
				amazon0601 & \textbf{44824} & 0.519      & 44856      & 0.504      & ca-dblp-2012 & \textbf{46277} & 0.232      & 46279      & 0.205  \\
				belgium\_osm & 503400     & 0.743      & \textbf{501650} & 0.673      & cage15     & 503412     & 30.221     & \textbf{503381} & 25.954  \\
				bio-CE-CX  & 2681       & 0.033      & \textbf{2669} & 0.030      & ca*wood-2009 & \textbf{51409} & 9.349      & 51437      & 9.035  \\
				bio-CE-GN  & 204        & 0.006      & \textbf{202} & 0.005      & caidaRouterLevel & \textbf{40907} & 0.142      & 40929      & 0.130  \\
				bio-DM-CX  & \textbf{536} & 0.009      & 538        & 0.010      & ca-IMDB    & \textbf{120791} & 1.453      & 120826     & 1.382  \\
				bio-dmela  & \textbf{1453} & 0.006      & 1454       & 0.005      & ca-MathSciNet & 65659      & 0.216      & \textbf{65645} & 0.191  \\
				bio-DR-CX  & 341        & 0.011      & \textbf{338} & 0.010      & channel*b050 & \textbf{434264} & 7.336      & 434313     & 7.266  \\
				bio*sion-yeast & \textbf{281} & 0.002      & 282        & 0.002      & citat*seer & \textbf{43971} & 0.311      & 43984      & 0.293  \\
				bio-gr*fruitfly & \textbf{1522} & 0.006      & 1524       & 0.005      & cit-HepPh  & \textbf{960} & 0.271      & 965        & 0.269  \\
				bio-gr*human & 1792       & 0.007      & \textbf{1789} & 0.006      & cit-HepTh  & \textbf{1061} & 0.212      & 1063       & 0.209  \\
				bio-gr*yeast & \textbf{299} & 0.021      & 304        & 0.018      & cit-Patents & \textbf{641775} & 15.070     & 641946     & 16.107  \\
				bio-HS-HT  & \textbf{461} & 0.003      & 462        & 0.002      & cnr-2000   & \textbf{22079} & 0.280      & 22087      & 0.259  \\
				bio-HS-LC  & \textbf{384} & 0.006      & 385        & 0.005      & coAuthor*seer & 33336      & 0.151      & \textbf{33327} & 0.134  \\
				bio-hu*gene1 & \textbf{879} & 1.038      & 880        & 1.026      & coAuthorsDBLP & 44124      & 0.224      & \textbf{44120} & 0.204  \\
				bio-hu*gene2 & 419        & 0.725      & \textbf{418} & 0.739      & cond-mat-2005 & 5722       & 0.024      & \textbf{5718} & 0.022  \\
				bio-mou*gene & \textbf{2837} & 1.746      & 2847       & 1.484      & coPapersCiteseer & 27359      & 1.466      & \textbf{27345} & 1.463  \\
				bio-Worm*v3 & \textbf{2183} & 0.091      & 2184       & 0.089      & coPapersDBLP & \textbf{37743} & 1.632      & 37746      & 1.611  \\
				bn*864*\_1-bg & 26387      & 19.222     & \textbf{26337} & 19.254     & dbpedia-link & \textbf{1968854} & 95.694     & 1968864    & 92.473  \\
				bn*865*\_1-bg & 24410      & 22.290     & \textbf{24386} & 22.455     & delaunay\_n22 & \textbf{751637} & 4.741      & 751665     & 4.507  \\
				bn*867*\_1-bg & 28573      & 20.013     & \textbf{28564} & 20.124     & delaunay\_n23 & 1502976    & 10.717     & \textbf{1502944} & 10.292  \\
				bn*868*\_1-bg & \textbf{27543} & 20.935     & 27558      & 20.854     & delaunay\_n24 & \textbf{3004962} & 28.032     & 3005258    & 25.657  \\
				bn*868*\_2-bg & \textbf{28078} & 22.322     & 28089      & 21.883     & delicious-ti & 140725     & 12.511     & \textbf{140723} & 12.323  \\
				bn*869*\_1-bg & \textbf{26345} & 17.822     & 26347      & 17.570     & eu-2005    & 32415      & 1.672      & \textbf{32413} & 1.682  \\
				bn*869*\_2-bg & \textbf{28227} & 20.039     & 28260      & 19.826     & Freescale1 & 608329     & 2.782      & \textbf{608162} & 2.581  \\
				bn*870*\_1-bg & \textbf{31272} & 20.766     & 31276      & 20.560     & friendster & 656689     & 36.958     & \textbf{656688} & 35.978  \\
				bn*870*\_2-bg & \textbf{30652} & 23.526     & 30690      & 23.999     & in-2004    & 77997      & 1.475      & \textbf{77990} & 1.350  \\
				bn*871*\_2-bg & \textbf{25965} & 20.847     & 25970      & 21.851     & indochina-2004 & 401559     & 15.083     & \textbf{401524} & 14.284  \\
				bn*873*\_1-bg & 22094      & 17.030     & \textbf{22090} & 17.979     & inf-europe\_os & 17810003   & 47.382     & \textbf{17757076} & 41.144  \\
				bn*874*\_2-bg & 28677      & 20.769     & \textbf{28675} & 21.882     & inf-germany\_os & 4054144    & 8.876      & \textbf{4035903} & 7.616  \\
				bn*876*\_1-bg & \textbf{33023} & 17.369     & 33033      & 17.889     & inf-road\_central & \textbf{4754295} & 16.641     & 4754615    & 15.449  \\
				bn*876*\_2-bg & 33025      & 16.994     & \textbf{33000} & 17.808     & inf-roadNet-CA & \textbf{619587} & 1.117      & 620043     & 0.933  \\
				bn*878*\_1-bg & 26378      & 15.334     & \textbf{26359} & 16.307     & inf-roadNet-PA & 344442     & 0.712      & \textbf{344398} & 0.534  \\
				bn*886*\_1 & \textbf{27749} & 19.993     & 27799      & 20.884     & inf-road-usa & 8093641    & 26.477     & \textbf{8081092} & 22.328  \\
				bn*889*\_1 & \textbf{26243} & 17.245     & 26270      & 18.275     & kron\_*logn16 & \textbf{3885} & 0.329      & 3887       & 0.315  \\
				bn*890*\_2 & \textbf{26721} & 19.507     & 26795      & 20.023     & kron\_*logn17 & \textbf{7470} & 0.797      & 7471       & 0.763  \\
				bn*896*\_2-bg & 24187      & 12.883     & \textbf{24154} & 13.718     & kron\_*logn18 & \textbf{14269} & 1.859      & 14271      & 1.747  \\
				bn*911*\_2 & 28427      & 17.791     & \textbf{28409} & 19.042     & kron\_*logn19 & \textbf{27751} & 4.164      & 27752      & 3.926  \\
				bn*912*\_2 & 28096      & 17.443     & \textbf{28062} & 18.610     & kron\_*logn21 & \textbf{102246} & 25.065     & 102248     & 26.030  \\
				bn*913*\_2 & \textbf{25437} & 22.590     & 25458      & 23.534     & ljou*-2008 & 1008173    & 33.097     & \textbf{1007911} & 31.118  \\
				bn*914*\_1-bg & \textbf{28251} & 13.629     & 28259      & 14.282     & luxem*\_os & 40169      & 0.028      & \textbf{40137} & 0.023  \\
				bn*914*\_2 & \textbf{27510} & 16.953     & 27521      & 17.632     & memchip    & \textbf{479145} & 2.084      & 479158     & 1.950  \\
				bn*916*\_1 & 26078      & 20.488     & \textbf{26065} & 17.291     & p2p-*lla04 & 2249       & 0.006      & \textbf{2247} & 0.005  \\
				bn*917*\_1 & \textbf{24903} & 18.050     & 24947      & 17.998     & p2p-*lla25 & 4524       & 0.009      & \textbf{4523} & 0.008  \\
				bn*918*\_1 & \textbf{27466} & 20.545     & 27535      & 20.855     & p2p-*lla30 & 7171       & 0.016      & \textbf{7170} & 0.014  \\
				bn*M87101698 & \textbf{23988} & 25.383     & 24027      & 25.335     & patents    & \textbf{653953} & 11.875     & 653967     & 11.712  \\
				bn*M87101705 & 21453      & 53.213     & \textbf{21448} & 49.726     & rec-epinion & 9057       & 3.524      & \textbf{9055} & 3.366  \\
				bn*M87102230 & 33268      & 12.465     & \textbf{33261} & 10.875     & rec-libimseti-dir & 13000      & 3.075      & \textbf{12992} & 2.968  \\
				\bottomrule
			\end{tabular}%
		}
		\label{tab:initial-1}%
	\end{table}%
	
	\begin{table}[htbp]
		\renewcommand{\thetable}{IX-II}
		\centering
		\caption{Comparison of initial solutions constructed by two approaches}
		\resizebox{\linewidth}{!}{
			\begin{tabular}{p{5em}rrrr|p{7em}rrrr}
				\toprule
				Instance   & \multicolumn{2}{c}{Perturbation} & \multicolumn{2}{c|}{Greed} & Instance   & \multicolumn{2}{c}{Perturbation} & \multicolumn{2}{c}{Greed} \\
				Name       & \multicolumn{1}{l}{Solution} & \multicolumn{1}{l}{Time} & \multicolumn{1}{l}{Solution} & \multicolumn{1}{l|}{Time} & Name       & \multicolumn{1}{l}{Solution} & \multicolumn{1}{l}{Time} & \multicolumn{1}{l}{Solution} & \multicolumn{1}{l}{Time} \\
				\midrule
				bn*M87102575 & \textbf{21114} & 13.882     & 21139      & 13.740     & rgg\_n\_2\_15\_s0 & 4189       & 0.018      & \textbf{4186} & 0.016  \\
				bn*M87103674 & 33791      & 10.547     & \textbf{33718} & 10.878     & rgg\_n\_2\_16\_s0 & \textbf{7888} & 0.038      & 7903       & 0.034  \\
				bn*M87104201 & 18172      & 36.854     & \textbf{18125} & 37.286     & rgg\_n\_2\_17\_s0 & 15000      & 0.087      & \textbf{14989} & 0.076  \\
				bn*M87104300 & 28896      & 11.016     & \textbf{28879} & 10.971     & rgg\_n\_2\_18\_s0 & 28521      & 0.232      & \textbf{28504} & 0.182  \\
				bn*M87104509 & \textbf{31413} & 9.289      & 31440      & 8.679      & rgg\_n\_2\_19\_s0 & 54466      & 0.437      & \textbf{54424} & 0.427  \\
				bn*M87105966 & \textbf{24661} & 29.072     & 24670      & 29.768     & rgg\_n\_2\_20\_s0 & \textbf{104388} & 1.276      & 104399     & 1.255  \\
				bn*M87107242 & 36153      & 9.212      & \textbf{36118} & 8.620      & rgg\_n\_2\_21\_s0 & \textbf{199980} & 2.947      & 200053     & 2.730  \\
				bn*M87108808 & \textbf{25510} & 35.519     & 25525      & 35.411     & rgg\_n\_2\_22\_s0 & 385079     & 6.192      & \textbf{384882} & 5.759  \\
				bn*M87109786 & 33794      & 9.510      & \textbf{33772} & 10.024     & rgg\_n\_2\_23\_s0 & \textbf{741341} & 14.116     & 741546     & 13.555  \\
				bn*M87110148 & 23523      & 30.698     & \textbf{23499} & 28.078     & rgg\_n\_2\_24\_s0 & 1430548    & 31.453     & \textbf{1430305} & 30.846  \\
				bn*M87110650 & 30108      & 6.786      & \textbf{30097} & 6.876      & road\_central & \textbf{4754295} & 18.446     & 4754615    & 16.238  \\
				bn*M87110670 & \textbf{30011} & 8.022      & 30012      & 7.893      & sc-pwtk    & \textbf{4704} & 0.340      & 4709       & 0.320  \\
				bn*M87111392 & \textbf{26392} & 38.040     & 26402      & 37.394     & sc-rel9    & 143570     & 26.014     & \textbf{143244} & 25.816  \\
				bn*M87113679 & 33590      & 11.447     & \textbf{33551} & 10.654     & sc-shipsec1 & 9267       & 0.178      & \textbf{9261} & 0.168  \\
				bn*M87113878 & \textbf{16467} & 62.757     & 16470      & 60.195     & sc-shipsec5 & \textbf{12513} & 0.236      & 12557      & 0.226  \\
				bn*M87115663 & 21515      & 38.623     & \textbf{21461} & 38.154     & soc-academia & 28610      & 0.234      & \textbf{28592} & 0.217  \\
				bn*M87115834 & \textbf{33945} & 10.227     & 33961      & 9.689      & soc-buzznet & \textbf{132} & 0.386      & 133        & 0.372  \\
				bn*M87116517 & \textbf{22120} & 27.513     & 22125      & 27.647     & soc-delicious & \textbf{55758} & 0.366      & 55759      & 0.324  \\
				bn*M87116523 & \textbf{7504} & 14.141     & 7512       & 14.147     & soc-digg   & \textbf{66174} & 1.337      & 66178      & 1.270  \\
				bn*M87117093 & \textbf{26004} & 23.094     & 26014      & 23.608     & soc-dogster & 26404      & 1.551      & \textbf{26396} & 1.472  \\
				bn*M87117515 & \textbf{33690} & 8.270      & 33695      & 8.378      & socfb-A-anon & \textbf{201827} & 12.328     & 201832     & 11.944  \\
				bn*M87118219 & \textbf{26292} & 23.245     & 26316      & 23.599     & socfb-B-anon & 187069     & 10.239     & \textbf{187064} & 9.886  \\
				bn*M87118347 & \textbf{9048} & 13.867     & 9052       & 15.092     & soc-flickr & \textbf{98092} & 0.656      & 98102      & 0.598  \\
				bn*M87118759 & \textbf{32969} & 10.436     & 32988      & 10.620     & soc-flickr-und & \textbf{295751} & 4.091      & 295758     & 3.868  \\
				bn*M87118954 & 22917      & 12.496     & \textbf{22895} & 12.430     & soc-FourSquare & 61015      & 0.878      & \textbf{61013} & 0.804  \\
				bn*M87119044 & 7891       & 12.232     & \textbf{7877} & 11.666     & soc-livejournal & \textbf{796756} & 16.245     & 796910     & 15.828  \\
				bn*M87119472 & 31168      & 11.373     & \textbf{31161} & 11.640     & soc-live*groups & \textbf{1071441} & 53.229     & 1071446    & 52.440  \\
				bn*M87121714 & \textbf{32745} & 8.413      & 32815      & 8.203      & soc-LiveMocha & 1470       & 0.316      & \textbf{1468} & 0.294  \\
				bn*M87122310 & \textbf{19852} & 14.124     & 19860      & 14.383     & soc-ljou*-2008 & 1008211    & 28.282     & \textbf{1007895} & 27.316  \\
				bn*M87123142 & 34887      & 9.050      & \textbf{34880} & 8.972      & soc-orkut  & \textbf{119979} & 53.192     & 120062     & 52.525  \\
				bn*M87123456 & 38037      & 9.129      & \textbf{37970} & 9.283      & soc-orkut-dir & 100639     & 72.095     & \textbf{100557} & 71.883  \\
				bn*M87124029 & \textbf{31005} & 7.393      & 31010      & 7.073      & soc-pokec  & \textbf{213224} & 8.133      & 213324     & 8.691  \\
				bn*M87124152 & \textbf{19542} & 30.825     & 19611      & 30.198     & Slashdot0902 & 15310      & 0.082      & \textbf{15309} & 0.076  \\
				bn*M87124563 & 19306      & 31.539     & \textbf{19249} & 33.387     & soc-twitter-higgs & \textbf{15054} & 2.547      & 15066      & 2.586  \\
				bn*M87124670 & \textbf{29437} & 7.396      & 29439      & 6.644      & soc-youtube & \textbf{89767} & 0.531      & 89772      & 0.482  \\
				bn*M87125286 & \textbf{22401} & 42.058     & 22407      & 42.424     & soc-you*snap & \textbf{213131} & 1.129      & 213135     & 1.015  \\
				bn*M87125330 & \textbf{34915} & 9.187      & 35001      & 8.754      & tech-as-skitter & \textbf{183884} & 2.916      & 183889     & 2.697  \\
				bn*M87125334 & \textbf{34962} & 6.482      & 35064      & 6.111      & tech-ip    & \textbf{176} & 5.244      & 177        & 5.096  \\
				bn*M87125521 & 34982      & 7.748      & \textbf{34943} & 7.169      & uk-2002    & \textbf{1043790} & 35.016     & 1043969    & 33.128  \\
				bn*M87125691 & 36231      & 8.042      & \textbf{36230} & 7.465      & wb-edu     & 836877     & 9.882      & \textbf{836640} & 8.624  \\
				bn*M87125989 & 37310      & 8.943      & \textbf{37298} & 8.840      & web-arabic-2005 & 17007      & 0.135      & \textbf{16998} & 0.119  \\
				bn*M87126525 & 18473      & 25.081     & \textbf{18461} & 25.324     & web-baidu-baike & \textbf{277019} & 9.583      & 277024     & 9.187  \\
				bn*M87127186 & \textbf{20255} & 35.611     & 20292      & 36.168     & web-BerkStan & \textbf{29523} & 0.695      & 29541      & 0.655  \\
				bn*M87127667 & \textbf{22967} & 12.790     & 22984      & 12.771     & web-Google & \textbf{80047} & 1.084      & 80059      & 1.008  \\
				bn*M87127677 & 26478      & 45.005     & \textbf{26448} & 42.335     & web-it-2004 & 33003      & 0.503      & \textbf{33001} & 0.477  \\
				bn*M87128194 & 26530      & 25.114     & \textbf{26521} & 25.687     & web-NotreDame & 23777      & 0.151      & \textbf{23774} & 0.132  \\
				bn*M87128519 & 24381      & 34.280     & \textbf{24340} & 33.890     & web-Stanford & 13968      & 0.349      & \textbf{13953} & 0.328  \\
				bn*M87129974 & 36857      & 9.099      & \textbf{36828} & 9.167      & web-wiki*2009 & \textbf{347849} & 2.402      & 347917     & 2.277  \\
				c-62ghs    & \textbf{15194} & 0.028      & 15196      & 0.026      & web-wiki*rowth & \textbf{117807} & 15.423     & 117809     & 14.910  \\
				c-66b      & \textbf{21193} & 0.026      & 21197      & 0.024      & wikipedia\_link\_en & 212878     & 14.861     & \textbf{212876} & 14.428  \\
				\bottomrule
			\end{tabular}%
		}
		\label{tab:initial-2}%
	\end{table}%
	
	\subsubsection{Discussion}
	According to the performance of the three algorithms on different instances, some observations are highlighted as follows. First, DmDS can find a series of solutions with different sizes within 1,000 seconds,  due to the intrinsic mechanism of the local search. However, 
	a single search strategy may lead to the algorithm becoming trapped in a local optimum, which may reduce the efficiency of solution improvement. DmDS   integrates two distinct search strategies (i.e.,  (2,1)-swaps and (3,2)-swaps) and conducts  tiny perturbations in the process of searching solutions, by which the diversity of solutions is enhanced  and the issue of  local optima is also weakened. 
	Next, a strict tabu strategy restricts the search space, which also reduces the rate of improving a solution.  DmDS   does not use tabu strategies, for which all vertices have an opportunity to be searched. The above analysis may be an explanation that DmDS has a good performance on the testing benchmark instances.

	\section{Conclusion} \label{sec6}
	In this paper, we propose an efficient local search algorithm for the MinDS problem named DmDS. Specifically, DmDS introduces a dual-mode local search framework in the search phase, a novel approach based on greedy strategy and perturbation mechanism to improve the quality of the initial solution, and a new criterion for vertex selection.
	The experimental results show that DmDS significantly outperforms other state-of-the-art MinDS algorithms on seven benchmarks including  346 instances (or families). In the future, we would like to apply the three proposed  techniques to solve other  intractable graph problems, such as the minimum vertex cover.


	
	\bibliographystyle{IEEEtran}
	\bibliography{ref}

\begin{thebibliography}{10}
\providecommand{\url}[1]{#1}
\csname url@samestyle\endcsname
\providecommand{\newblock}{\relax}
\providecommand{\bibinfo}[2]{#2}
\providecommand{\BIBentrySTDinterwordspacing}{\spaceskip=0pt\relax}
\providecommand{\BIBentryALTinterwordstretchfactor}{4}
\providecommand{\BIBentryALTinterwordspacing}{\spaceskip=\fontdimen2\font plus
\BIBentryALTinterwordstretchfactor\fontdimen3\font minus
  \fontdimen4\font\relax}
\providecommand{\BIBforeignlanguage}[2]{{%
\expandafter\ifx\csname l@#1\endcsname\relax
\typeout{** WARNING: IEEEtran.bst: No hyphenation pattern has been}%
\typeout{** loaded for the language `#1'. Using the pattern for}%
\typeout{** the default language instead.}%
\else
\language=\csname l@#1\endcsname
\fi
#2}}
\providecommand{\BIBdecl}{\relax}
\BIBdecl

\bibitem{wang2009positive}
F.~Wang, E.~Camacho, and K.~Xu, ``Positive influence dominating set in online
  social networks.'' \emph{COCOA}, vol.~9, pp. 313--321, 2009.

\bibitem{zhao2020minimum}
D.~Zhao, G.~Xiao, Z.~Wang, L.~Wang, and L.~Xu, ``Minimum dominating set of
  multiplex networks: Definition, application, and identification,'' \emph{IEEE
  Transactions on Systems, Man, and Cybernetics: Systems}, vol.~51, no.~12, pp.
  7823--7837, 2020.

\bibitem{NACHER201657}
J.~C. Nacher and T.~Akutsu, ``Minimum dominating set-based methods for
  analyzing biological networks,'' \emph{Methods}, vol. 102, pp. 57--63, 2016.

\bibitem{wuchty2014controllability}
S.~Wuchty, ``Controllability in protein interaction networks,''
  \emph{Proceedings of the National Academy of Sciences}, vol. 111, no.~19, pp.
  7156--7160, 2014.

\bibitem{fomin2004exact}
F.~V. FOMIN, D.~KRATSCH, and G.~J. WOEGINGER, ``Exact (exponential) algorithms
  for the dominating set problem,'' \emph{Lecture notes in computer science},
  pp. 245--256, 2004.

\bibitem{grandoni2006note}
F.~Grandoni, ``A note on the complexity of minimum dominating set,''
  \emph{Journal of Discrete Algorithms}, vol.~4, no.~2, pp. 209--214, 2006.

\bibitem{van2006design}
J.~M. van Rooij, ``Design by measure and conquer: an o (1. 5086n) algorithm for
  dominating set and similar problems,'' Ph.D. dissertation, Master Thesis,
  Department of Information and Computing Sciences, Utrecht~…, 2006.

\bibitem{schiermeyer2008efficiency}
I.~Schiermeyer, ``Efficiency in exponential time for domination-type
  problems,'' \emph{Discrete Applied Mathematics}, vol. 156, no.~17, pp.
  3291--3297, 2008.

\bibitem{van2008design}
J.~M. van Rooij and H.~L. Bodlaender, ``Design by measure and conquer, a faster
  exact algorithm for dominating set,'' in \emph{STACS 2008}.\hskip 1em plus
  0.5em minus 0.4em\relax IBFI Schloss Dagstuhl, 2008, pp. 657--668.

\bibitem{van2009inclusion}
J.~M. Van~Rooij, J.~Nederlof, and T.~C. van Dijk, ``Inclusion/exclusion meets
  measure and conquer: Exact algorithms for counting dominating sets,'' in
  \emph{Algorithms-ESA 2009: 17th Annual European Symposium, Copenhagen,
  Denmark, September 7-9, 2009. Proceedings 17}.\hskip 1em plus 0.5em minus
  0.4em\relax Springer, 2009, pp. 554--565.

\bibitem{van2011exact}
J.~M. Van~Rooij and H.~L. Bodlaender, ``Exact algorithms for dominating set,''
  \emph{Discrete Applied Mathematics}, vol. 159, no.~17, pp. 2147--2164, 2011.

\bibitem{iwata2012faster}
Y.~Iwata, ``A faster algorithm for dominating set analyzed by the potential
  method,'' in \emph{Parameterized and Exact Computation: 6th International
  Symposium, IPEC 2011, Saarbr{\"u}cken, Germany, September 6-8, 2011. Revised
  Selected Papers 6}.\hskip 1em plus 0.5em minus 0.4em\relax Springer, 2012,
  pp. 41--54.

\bibitem{garey1979computers}
M.~R. Garey and D.~S. Johnson, \emph{Computers and intractability}.\hskip 1em
  plus 0.5em minus 0.4em\relax freeman San Francisco, 1979, vol. 174.

\bibitem{cai2021two}
S.~Cai, W.~Hou, Y.~Wang, C.~Luo, and Q.~Lin, ``Two-goal local search and
  inference rules for minimum dominating set,'' in \emph{Proceedings of the
  Twenty-Ninth International Conference on International Joint Conferences on
  Artificial Intelligence}, 2021, pp. 1467--1473.

\bibitem{lamm2016finding}
S.~Lamm, P.~Sanders, C.~Schulz, D.~Strash, and R.~F. Werneck, ``Finding
  near-optimal independent sets at scale,'' in \emph{2016 Proceedings of the
  Eighteenth Workshop on Algorithm Engineering and Experiments (ALENEX)}.\hskip
  1em plus 0.5em minus 0.4em\relax SIAM, 2016, pp. 138--150.

\bibitem{dahlum2016accelerating}
J.~Dahlum, S.~Lamm, P.~Sanders, C.~Schulz, D.~Strash, and R.~F. Werneck,
  ``Accelerating local search for the maximum independent set problem,'' in
  \emph{Experimental Algorithms: 15th International Symposium, SEA 2016, St.
  Petersburg, Russia, June 5-8, 2016, Proceedings 15}.\hskip 1em plus 0.5em
  minus 0.4em\relax Springer, 2016, pp. 118--133.

\bibitem{akiba2016branch}
T.~Akiba and Y.~Iwata, ``Branch-and-reduce exponential/fpt algorithms in
  practice: A case study of vertex cover,'' \emph{Theoretical Computer
  Science}, vol. 609, pp. 211--225, 2016.

\bibitem{9241416}
E.~Zhu, F.~Jiang, C.~Liu, and J.~Xu, ``Partition independent set and
  reduction-based approach for partition coloring problem,'' \emph{IEEE
  Transactions on Cybernetics}, vol.~52, no.~6, pp. 4960--4969, 2022.

\bibitem{LIU2023119140}
``Identifying the cardinality-constrained critical nodes with a hybrid
  evolutionary algorithm,'' \emph{Information Sciences}, vol. 642, p. 119140,
  2023.

\bibitem{quan2021local}
C.~Quan and P.~Guo, ``A local search method based on edge age strategy for
  minimum vertex cover problem in massive graphs,'' \emph{Expert Systems with
  Applications}, vol. 182, p. 115185, 2021.

\bibitem{potluri2011two}
A.~Potluri and A.~Singh, ``Two hybrid meta-heuristic approaches for minimum
  dominating set problem,'' in \emph{Swarm, Evolutionary, and Memetic
  Computing: Second International Conference, SEMCCO 2011, Visakhapatnam,
  Andhra Pradesh, India, December 19-21, 2011, Proceedings, Part II 2}.\hskip
  1em plus 0.5em minus 0.4em\relax Springer, 2011, pp. 97--104.

\bibitem{hedar2010hybrid}
A.-R. Hedar and R.~Ismail, ``Hybrid genetic algorithm for minimum dominating
  set problem,'' in \emph{Computational Science and Its Applications--ICCSA
  2010: International Conference, Fukuoka, Japan, March 23-26, 2010,
  Proceedings, Part IV 10}.\hskip 1em plus 0.5em minus 0.4em\relax Springer,
  2010, pp. 457--467.

\bibitem{alharbi2017genetic}
S.~Alharbi and I.~Venkat, ``A genetic algorithm based approach for solving the
  minimum dominating set of queens problem,'' \emph{Journal of Optimization},
  vol. 2017, 2017.

\bibitem{hedar2012simulated}
A.-R. Hedar and R.~Ismail, ``Simulated annealing with stochastic local search
  for minimum dominating set problem,'' \emph{International Journal of Machine
  Learning and Cybernetics}, vol.~3, pp. 97--109, 2012.

\bibitem{fan2019efficient}
Y.~Fan, Y.~Lai, C.~Li, N.~Li, Z.~Ma, J.~Zhou, L.~J. Latecki, and K.~Su,
  ``Efficient local search for minimum dominating sets in large graphs,'' in
  \emph{Database Systems for Advanced Applications: 24th International
  Conference, DASFAA 2019, Chiang Mai, Thailand, April 22--25, 2019,
  Proceedings, Part II}.\hskip 1em plus 0.5em minus 0.4em\relax Springer, 2019,
  pp. 211--228.

\bibitem{cai2011local}
S.~Cai, K.~Su, and A.~Sattar, ``Local search with edge weighting and
  configuration checking heuristics for minimum vertex cover,''
  \emph{Artificial Intelligence}, vol. 175, no. 9-10, pp. 1672--1696, 2011.

\bibitem{wang2017local}
Y.~Wang, S.~Cai, and M.~Yin, ``Local search for minimum weight dominating set
  with two-level configuration checking and frequency based scoring function,''
  \emph{Journal of Artificial Intelligence Research}, vol.~58, pp. 267--295,
  2017.

\bibitem{wang2018fast}
Y.~Wang, S.~Cai, J.~Chen, and M.~Yin, ``A fast local search algorithm for
  minimum weight dominating set problem on massive graphs.'' in \emph{IJCAI},
  2018, pp. 1514--1522.

\bibitem{andrade2012fast}
D.~V. Andrade, M.~G. Resende, and R.~F. Werneck, ``Fast local search for the
  maximum independent set problem,'' \emph{Journal of Heuristics}, vol.~18, pp.
  525--547, 2012.

\bibitem{abed2022hybrid}
S.~A. Abed, H.~M. Rais, J.~Watada, and N.~R. Sabar, ``A hybrid local search
  algorithm for minimum dominating set problems,'' \emph{Engineering
  Applications of Artificial Intelligence}, vol. 114, p. 105053, 2022.

\bibitem{cai2015balance}
S.~Cai, ``Balance between complexity and quality: Local search for minimum
  vertex cover in massive graphs,'' in \emph{Twenty-Fourth International Joint
  Conference on Artificial Intelligence}, 2015.

\bibitem{zhu2022adaptive}
E.~Zhu, Y.~Zhang, and C.~Liu, ``An adaptive repeated-intersection-reduction
  local search for the maximum independent set problem,'' \emph{arXiv preprint
  arXiv:2208.07777}, 2022.

\bibitem{chalupa2018order}
D.~Chalupa, ``An order-based algorithm for minimum dominating set with
  application in graph mining,'' \emph{Information Sciences}, vol. 426, pp.
  101--116, 2018.

\bibitem{xu2007benchmarks}
K.~B. Xu, ``Benchmarks with hidden optimum solutions for graph problems,''
  \emph{URL http://www. nlsde. buaa. edu. cn/kexu/benchmarks/graph-benchmarks.
  htm}, 2007.

\bibitem{DBLP:conf/dimacs/2012}
D.~A. Bader, H.~Meyerhenke, P.~Sanders, and D.~Wagner, Eds., \emph{Graph
  Partitioning and Graph Clustering, 10th {DIMACS} Implementation Challenge
  Workshop, Georgia Institute of Technology, Atlanta, GA, USA, February 13-14,
  2012. Proceedings}, ser. Contemporary Mathematics, vol. 588.\hskip 1em plus
  0.5em minus 0.4em\relax American Mathematical Society, 2013.

\bibitem{nr}
\BIBentryALTinterwordspacing
R.~A. Rossi and N.~K. Ahmed, ``The network data repository with interactive
  graph analytics and visualization,'' in \emph{AAAI}, 2015. [Online].
  Available: \url{https://networkrepository.com}
\BIBentrySTDinterwordspacing

\end{thebibliography}

\end{document}